\newtheorem {definition}{Definition}
\newtheorem {lemma}{Lemma}
\newtheorem {theorem}{Theorem}
\newtheorem {corollary}{Corollary}
\newtheorem {proposition}{Proposition}
\begin{document}

\title{Analysis of a CSMA-Based Wireless Network: Feasible Throughput Region and Power Consumption}


\author{Fu-Te Hsu,~\IEEEmembership{Student Member,~IEEE,} and Hsuan-Jung Su,~\IEEEmembership{Member,~IEEE}
\thanks{The material in this paper was presented in part at IEEE International Symposium on Personal, Indoor and Mobile Radio Communications (PIMRC) 2009, Tokyo, Japan.}
\thanks{The authors are with the Department of Electrical Engineering
and Graduate Institute of Communication Engineering, National Taiwan University, Taipei, Taiwan, 10617
                  (email: f96942059@ntu.edu.tw, hjsu@cc.ee.ntu.edu.tw)}}

\maketitle

\begin{abstract}
We analytically study a carrier sense multiple access (CSMA)-based network.
In the network, the nodes have their own average throughput demands for transmission to a common base station. The CSMA is based on the request-to-send (RTS)/clear-to-send (CTS) handshake mechanism. Each node individually chooses its probability of transmitting an RTS packet, which specifies the length of its requested data transmission period. The RTS packets transmitted by different nodes in the same time slot
interfere with one another, and
compete to be received by the base station. If a node's RTS has the received
signal to interference plus noise ratio (SINR)
higher than the capture ratio, it will be successfully received. The node will then be granted the data transmission period. The transmission probabilities of RTS packets of all nodes will determine the average throughput and power consumption of each node. The set of all possible throughput demands of nodes that can be supported by the network is called the feasible throughput region. We characterize the feasible throughput region and provide an upper bound on the total power consumption for any throughput demands in the feasible throughput region. The upper bound
corresponds to
one of three points in the feasible throughput region depending on the fraction of time occupied by the RTS packets.


\end{abstract}

\begin{IEEEkeywords}
Performance analysis, carrier sense multiple access (CSMA), medium access control (MAC), RTS/CTS, power consumption, Nash equilibrium.
\end{IEEEkeywords}

\section{Introduction}\label{Introduction}

The progress of wireless network technologies has provided ubiquitous services.
Medium access control (MAC) is one of the important mechanisms that contribute to the success of wireless networks.
For example, in wireless local area network (WLAN), the Carrier Sense Multiple Access with Collision Avoidance (CSMA/CA) and its variants are used as the MAC protocol (see more information in the IEEE 802.11 standards \cite{IEEE802.11}).%

Several studies analyzed the performance of IEEE 802.11-based networks, and some methods were proposed to improve the efficiency of channel utilization and power consumption \cite{Bianchi98}\cite{Bianchi00}\cite{p-persistent_802.11_TON00}\cite{Dynamic_802.11_JSAC00}\cite{Efficiency_EnergyConsumption_TMC02}.
In \cite{Bianchi00}, Bianchi developed a discrete-time Markov chain model to describe the evolution of the 802.11 backoff process. In \cite{p-persistent_802.11_TON00}, Cal\`{i}, Conti and Gregori used the $p$-persistet model instead of the standard binary exponential backoff in IEEE 802.11 due to its simplicity for analytical studies. The difference there is in the selection of the backoff interval which is sampled from a geometric distribution with parameter $p$. It was shown that the $p$-persistent IEEE 802.11 can closely approximate the standard protocol. In \cite{Capacity802.11_WirelessNetwork01}, Tay and Chua adopted a different modeling approach based on average values for analytical study. They derived closed-form approximations for the collision probability and maximum throughput.
All these works considered the saturated throughput of the system.

Later, the analytical model in \cite{p-persistent_802.11_TON00} was further explored.
In \cite{Dynamic_802.11_JSAC00}, a distributed tuned backoff was proposed and analyzed in depth.
In \cite{Efficiency_EnergyConsumption_TMC02}, the optimal $p$ value that maximizes the throughput and minimizes energy consumption was derived for a $p$-persistent CSMA model.

Recent advances in signal processing enabled the possibility of receiving a packet when multiple nodes are transmitting packets simultaneous. This is known as the multipacket reception (MPR) capability \cite{MPR_Mag01}. There have been some studies showing that the system performance of the earlier WLAN design based on the collision channel (i.e., packets collide when more than one node transmit) is not optimal and can be enhanced with MPR \cite{MPR_in_WLAN_ICC06}\cite{MPR_in_WLAN_TMC09}. The system performance of WLAN can also be improved by utilizing multi-user diversity. The readers are referred to \cite{MultiuserDiversity_in_WLAN} and the references therein for this issue.

From a completely different point of view, Lee {\it et al} \cite{Reverse_engineering_MAC07} used the tool of game theory \cite{GameTheory94}\cite{GameTheory91} and discovered
that the nodes in the network are participating implicitly in a noncooperative game with appropriate utility functions in backoff-based MAC protocols.
Game theory is a useful tool in analyzing distributed networks with self-configuring nodes and Nash equilibrium is a solution concept of a game. The readers are referred to \cite{GameTheory_and_the_Design} for the application of game theory in wireless networks.

Most studies on the performance analysis of a wireless network were under the condition of homogeneous nodes, i.e., the nodes use the same transmission probability, with the same transmission rate, etc. In this paper, we study a simple CSMA-based wireless network with heterogeneous nodes. In this multiple access network, each node has its own throughput demand to a common base station, and chooses its transmission probability individually to satisfy its throughput demand. The MAC protocol considered is CSMA with the request-to-send (RTS)/clear-to-send (CTS) handshake mechanism which was originally introduced to solve the hidden terminal problem (see the IEEE 802.11 standards \cite{IEEE802.11} for a survey).

The objective of our paper is similar to that of \cite{Efficiency_EnergyConsumption_TMC02}, that is, to analyze the throughput and power consumption of a CSMA-based network. However, there are some differences.
First, we consider heterogeneous nodes that node $i$ uses $p_i$-persistent CSMA in the network.
Second, the numbers of slots occupied by data transmissions were assumed to be identical and independent geometric random variables for all nodes in \cite{Efficiency_EnergyConsumption_TMC02}. In our paper, the lengths of data transmission periods differ from user to user. 
Third, we consider the CSMA model with the RTS/CTS handshake mechanism which is different from the CSMA model in \cite{Efficiency_EnergyConsumption_TMC02}.

Our paper is also similar to \cite{Menache_RTS/CTS} in the CSMA model. However,
we incorporate into the network one particular MPR model, the signal-to-interference-plus-noise-ratio (SINR) capture model, which is more general than the collision channel in \cite{Menache_RTS/CTS}. In addition, in \cite{Menache_RTS/CTS} the lengths of data transmission periods were the same for all nodes. Also, \cite{Menache_RTS/CTS} focused on
the analysis of convergence of the distributed algorithm which the authors significantly strengthened from their earlier work \cite{Aloha_JSAC08}, whereas our paper aims to characterize the the feasible region of throughput demands (which will be called the feasible throughput region in the remainder of this paper for brevity) and power consumption in the network.

Other related works are \cite{QosRegion_Boche04}\cite{QosRegion_Catrein04} which characterized the feasible quality of service (QoS) region, or SINR region, that can be supported by CDMA networks with power control.
However, those results can not be applied to our model in which the number of users interfering with one another is random.

\subsection{Contributions and Organization of this Paper}
This paper is devoted to analyzing the throughput and power consumption of a CSMA-based network.
With appropriate formulation through a game-theoretic approach, the traditional system (optimal) performance is contained in the Nash equilibrium.
We study the Nash equilibrium point, which, in the game model considered, is a vector of probabilities of transmitting requests (RTS packets) for heterogenous nodes in the CSMA network. The properties of the Nash equilibrium are then used to derive the feasible throughput region of the network. We further derive an upper bound on the total power consumption for any throughput demands in the feasible throughput region, and show that the upper bound
corresponds to one of three points in the feasible throughput region depending on the fraction of time occupied by the RTS packets.



The paper is organized as follows.
We describe the details of our network model in Section \ref{Model}, and formulate the problem in Section \ref{Problem Formulation}. The main results of this paper on the analysis of feasible throughput region and power consumption of the network are derived in Section \ref{Optimum Transmission Periods}. Finally, some conclusions are given in Section \ref{Conclusion}.

\section{The Network Model}\label{Model}
We consider a wireless network, where $n$ nodes transmit data to a common base station (BS) over a shared channel.
Time is slotted. Nodes intending to send data ask for the permission to transmit from the BS by sending an RTS packet. The BS responds with a CTS packet granting the use of the channel to at most one node at a time. Let the total duration of this two-way handshake be $T_0$ slots. If no node is granted the permission to send data, the two-way handshake is repeated for the next $T_0$ slots. If node $i$ is granted the permission, it can send its data without the interruption from the other nodes for a duration of $T_i$ slots, where $T_i$ is specified in the RTS packet sent by node $i$. The transmission power is $P_T$ for all nodes, for the RTS as well as the data packets. Without loss of generality, the transmission data rate is defined as one \textbf{data packet per slot}, where each data packet contains the same amount of data.

Independent \emph{Rayleigh} fading channels between nodes and the BS are assumed. The period of exchanging RTS and CTS is called the \emph{handshake phase}, and the period of data transmission is called the \emph{transmission phase}. Fig.~\ref{RTS_CTS} illustrates the CSMA with the RTS/CTS handshake mechanism.
We also assume that all nodes always have data to send as in \cite{Bianchi00}\cite{Efficiency_EnergyConsumption_TMC02}\cite{p-persistent_802.11_TON00} to analyze the network.

\textbf{Reception model}:

In each handshake phase, the BS can successfully receive the RTS packet with SINR larger than the \emph{capture ratio} $b$, and grant the permission to the corresponding node. We assume that $b>1$ (which is common for most systems except the spread spectrum systems), so at most one node is granted the permission.

\textbf{Behavior of Nodes}:

We associate node $i$ with an average throughput demand $\rho_i$ (in terms of the average number of successfully received data packets per slot), and assume that node $i$ chooses a request probability $p_i$ such that it randomly transmits an RTS packet with probability $p_i$ in every handshake phase.
%
The request probability vector $\textbf{p}=(p_1,\ldots,p_n)$ determines the
average throughput and power consumption of each node.

\section{Problem Formulation}\label{Problem Formulation}
\subsection{SINR Capture Model}\label{SINR_capture_model}
We first consider the probability of successful reception of the RTS packet from a particular node (thus the data transmission is granted to that node).
In a given handshake phase, the SINR of node $i$'s RTS packet is given by
\begin{align}
SINR_i=\frac{B_i|h_i|^2P_T}{N_0+\sum_{j\neq{}i}B_j|h_j|^2P_T}
\end{align}
where $B_i$ is a binary indicator which is $1$ if node $i$ sends an RTS in that handshake phase, and $0$ otherwise.
$N_0$ is the power of the additive noise at the BS, $h_i$ is the channel gain between node $i$ and the BS.
We assume $|h_i|^2$, $i=1,\ldots,n$, are independent, exponentially distributed random variables with mean one.

When $s$ nodes simultaneously transmit RTS packets to the BS, the probability of data transmission granted to a particular node (say, node $1$) is given by
\begin{align*}
Pr\left[SINR_1>b\right]
=Pr\left[\sum^s_{i=2}|h_i|^2-\frac{|h_1|^2}{b}<-\frac{N_0}{P_T}\right]
=\left(\frac{1}{1+b}\right)^{s-1}e^{-b\frac{N_0}{P_T}},
\end{align*}
where the last equality is obtained as follows.
Let the probability density function (PDF) of $|h_1|^2$ be $f_{|h_1|^2}(x)=e^{-x}, x\ge 0$, then $f_{-\frac{|h_1|^2}{b}}(x)=be^{bx}, x\le 0$, and the PDF of the sum of the independent and identically distributed (i.i.d.) $|h_j|^2$ is given by $f_{\sum^s_{i=2}|h_i|^2}(x)=\frac{x^{s-2}}{(s-2)!}e^{-x}, x\ge 0$. It follows that the PDF of
$\sum^s_{i=2}|h_i|^2-\frac{|h_1|^2}{b}$ for $x\le 0$ is given by
\begin{align*}
f_{\sum^s_{i=2}|h_i|^2-\frac{|h_1|^2}{b}}(x)=\int^{\infty}_0e^{-t}\frac{t^{s-2}}{(s-2)!}be^{b(x-t)}dt
=\left(\frac{1}{1+b}\right)^{s-1}be^{bx}~(\text{for}~x\le 0).
\end{align*}
Therefore,
\begin{align*}
Pr\left[\sum^s_{i=2}|h_i|^2-\frac{|h_1|^2}{b}<-\frac{N_0}{P_T}\right]
=\int^{-\frac{N_0}{P_T}}_{-\infty}f_{\sum^s_{i=2}|h_i|^2-\frac{|h_1|^2}{b}}(x) dx
=\left(\frac{1}{1+b}\right)^{s-1}e^{-b\frac{N_0}{P_T}}.
\end{align*}
Considering a given request probability vector $\textbf{p}=(p_1,\ldots,p_n)$ based on which the nodes send RTS packets, the probability of data transmission granted to a particular node can then be expressed as a function of $\textbf{p}$ by the following proposition.

\begin{proposition}
\label{Prob_Granted}
Assuming that the capture ratio is $b$, and there are $n$ nodes in the network having the request probability vector $\textbf{p}=(p_1,\ldots,p_n)$, then in a handshake phase, node $i$ is granted data transmission with probability
\begin{align}
\label{Eq_G_i}
G_i(p_1,\ldots,p_n)&\triangleq\ \mbox{Pr (BS grants node $i$ data transmission$|$\textbf{p})}\notag\\
&=e^{-b\frac{N_0}{P_T}}p_i\prod_{j\neq{}i}\left(1-\frac{bp_j}{1+b}\right).
\end{align}
\end{proposition}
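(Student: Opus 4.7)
The plan is to condition on the (random) subset of other nodes that choose to transmit an RTS in the same handshake phase as node $i$, and then apply the SINR capture computation already carried out in the paragraph preceding the proposition.

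First I would observe that for node $i$ to be granted data transmission, two events must both occur: (i) node $i$ itself transmits an RTS, which happens with probability $p_i$ and is independent of everything else; and (ii) given the set of simultaneous RTS senders, node $i$'s SINR exceeds the capture ratio $b$. Let $S \subseteq \{1,\ldots,n\}\setminus\{i\}$ denote the (random) set of other nodes transmitting in the same handshake phase. By the independence of the $B_j$'s and of the fading gains, for any fixed subset $S$ with $|S|=s-1$, the probability that $S$ is exactly the set of other transmitters is $\prod_{j\in S}p_j\prod_{j\notin S,\,j\neq i}(1-p_j)$, and by the calculation given just before the proposition the conditional capture probability given $s$ simultaneous transmitters is $\bigl(\tfrac{1}{1+b}\bigr)^{s-1}e^{-bN_0/P_T}$.

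Next I would write $G_i$ as the sum over all $S$ of these contributions, namely
\begin{align*}
G_i = p_i\, e^{-bN_0/P_T}\sum_{S\subseteq\{j\neq i\}}\left(\frac{1}{1+b}\right)^{|S|}\prod_{j\in S}p_j\prod_{j\notin S,\,j\neq i}(1-p_j),
\end{align*}
and then recognize the sum as a product by the standard factorization identity $\sum_{S}\prod_{j\in S}a_j\prod_{j\notin S}b_j = \prod_j(a_j+b_j)$. Applying this with $a_j = p_j/(1+b)$ and $b_j = 1-p_j$ collapses the sum into $\prod_{j\neq i}\bigl(1-p_j+\tfrac{p_j}{1+b}\bigr) = \prod_{j\neq i}\bigl(1-\tfrac{bp_j}{1+b}\bigr)$, which yields the claimed formula.

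There is no real obstacle here; the only thing to be careful about is ensuring that the conditional capture probability derived earlier, which was stated for a generic ``$s$ simultaneous transmitters including node $1$,'' applies verbatim to node $i$ with an arbitrary transmitting subset $S$. This is justified by the symmetry of the Rayleigh fading assumption (the $|h_j|^2$ are i.i.d.\ exponential), so the capture probability depends on $S$ only through its cardinality, which is exactly what is needed for the factorization step.
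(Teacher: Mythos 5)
Your proposal is correct and follows essentially the same route as the paper: condition on the set of simultaneous RTS senders, use the previously derived capture probability $\left(\frac{1}{1+b}\right)^{s-1}e^{-bN_0/P_T}$, and collapse the resulting sum over subsets into the product $\prod_{j\neq i}\left(1-p_j+\frac{p_j}{1+b}\right)$. The paper merely writes the subset sum grouped by cardinality before recognizing the same factorization, so there is no substantive difference.
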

\begin{proof}
Let $(x_1,\ldots,x_k)\in{}I_{-\{y_1,\ldots,y_l\}}$ denote $x_1<\cdots<x_k$, all belonging to the node index set $I_{-\{y_1,\ldots,y_l\}}\triangleq\{1,2,\ldots,n\}\setminus\ \{y_1,\ldots,y_l\}$, where $\setminus$ denotes the set minus operator.
Then
\begin{align*}
G_i(p_1,\ldots,p_n)=&\sum^{n}_{s=1}\mbox{Pr($s$ nodes request)$\cdot$Pr(node $i$ is granted$|$$s$ nodes request)}\\
=&p_i\cdot\prod_{j\in{}I_{-i}}(1-p_j)\cdot e^{-b\frac{N_0}{P_T}}\\
&+p_i\cdot\sum_{j\in{}I_{-i}}\left(p_j\prod_{k\in{}I_{-\{i,j\}}}(1-p_k)\right)\cdot\left(\frac{1}{1+b}\right)e^{-b\frac{N_0}{P_T}}\\
&+p_i\cdot\sum_{(j,k)\in{}I_{-i}}\left(p_jp_k\prod_{l\in{}I_{-\{i,j,k\}}}(1-p_l)\right)\cdot\left(\frac{1}{1+b}\right)^2e^{-b\frac{N_0}{P_T}}\\
&+\cdots\\
&+p_i\left(\prod_{j\in{}I_{-i}}p_j\right)\cdot\left(\frac{1}{1+b}\right)^{n-1}e^{-b\frac{N_0}{P_T}}\\
=&e^{-b\frac{N_0}{P_T}}p_i\left\{\prod_{j\in{}I_{-i}}(1-p_j)\right.\\
&+\sum_{j\in{}I_{-i}}\left(\left(\frac{p_j}{1+b}\right)\prod_{k\in{}I_{-\{i,j\}}}(1-p_k)\right)\\
&+\sum_{(j,k)\in{}I_{-i}}\left(\left(\frac{p_j}{1+b}\right)\left(\frac{p_k}{1+b}\right)\prod_{l\in{}I_{-\{i,j,k\}}}(1-p_l)\right)\\
&+\cdots\\
&\left.+\prod_{j\in{}I_{-i}}\left(\frac{p_j}{1+b}\right)\right\}\\
=&e^{-b\frac{N_0}{P_T}}p_i\prod_{j\neq{}i}\left[\left(\frac{p_j}{1+b}\right)+\left(1-p_j\right)\right]\\
=&e^{-b\frac{N_0}{P_T}}p_i\prod_{j\neq{}i}\left(1-\frac{bp_j}{1+b}\right).
\end{align*}
\end{proof}
In every handshake phase, the BS grants data transmission to node $i$ with probability $G_i$. It follows that, on average, node $i$ transmits data with period $G_iT_i$ after every handshake period $T_0$.

For the data transmission, assume that the entire data transmission period is encoded as one codeword which is called a \emph{frame}. Further assume that a good channel code, such as turbo codes, is used.
Then, the frame error rate at reasonable operating signal to noise ratios (SNR) is smaller or stays roughly the same as $T_i$ (code block size) increases \cite{turbo_blocksize98}\cite{turbo_blocksize01}. In time varying channels, the frame error rate decreases with $T_i$ even more evidently if proper interleaving is applied to exploit the increased time diversity (due to increased code block size).
We denote the frame success rate of node $i$ averaged over all possible channel realizations, when node $i$'s RTS packet is successfully received and its data transmission period is $T_i$, as $P^s_i(T_i)$. Note that in time-correlated channels with coherence time larger than the handshake period $T_0$, $P^s_i(T_i)$ is usually close to one due to node $i$'s good channel quality that won the competition in the handshake period.
Then, we have the average throughput as the following expression (this simple result can be formally obtained from the renewal process \cite{Renewal06}).
\begin{proposition}\label{Average_throughput}
The average throughput of node $i$ is given by
\begin{align}
r_i(\textbf{p})=\frac{P^s_i(T_i)G_iT_i}{T_0+\sum_jG_jT_j},
\end{align}
where $G_i=e^{-b\frac{N_0}{P_T}}p_i\prod_{j\neq i}(1-\frac{bp_j}{1+b})$, $P^s_i(T_i)$ is the average frame success rate of node $i$ when the data transmission period is $T_i$ slots, and we have used $\sum_j$ to denote $\sum^n_{j=1}$ for simplicity.
\end{proposition}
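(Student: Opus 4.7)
The plan is to apply the renewal reward theorem after identifying a natural cycle structure in the system's evolution. First I would define a renewal epoch to occur at the start of every handshake phase that follows either (i) an unsuccessful handshake (no node granted) or (ii) a completed data transmission phase. Because the request probabilities are fixed and the channel gains in successive handshakes are independent, the lengths of successive cycles, together with the reward earned in each cycle, form an i.i.d.\ sequence, so the system is a renewal reward process.

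Next I would compute the expected cycle length. Using Proposition~\ref{Prob_Granted}, in any handshake phase node $j$ is granted with probability $G_j$, and the events \{node $j$ granted\} are mutually exclusive since $b>1$ ensures at most one node can capture. Hence with probability $1-\sum_j G_j$ the cycle consists only of the handshake ($T_0$ slots), and with probability $G_j$ it consists of the handshake followed by node $j$'s data transmission ($T_0+T_j$ slots). Therefore
\begin{align*}
\mathbb{E}[L_{\text{cycle}}] = T_0\Bigl(1-\sum_j G_j\Bigr) + \sum_j G_j (T_0 + T_j) = T_0 + \sum_j G_j T_j.
\end{align*}

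Then I would compute the expected reward for node $i$ per cycle, where reward is measured in successfully received data packets. Node $i$ earns reward only when it is granted the channel in the cycle (probability $G_i$), in which case its frame, conditional on being granted, succeeds with probability $P^s_i(T_i)$ and delivers $T_i$ data packets. Thus
\begin{align*}
\mathbb{E}[\text{reward per cycle for node } i] = G_i\, P^s_i(T_i)\, T_i.
\end{align*}
By the renewal reward theorem, the long-run average throughput of node $i$ is the ratio of these two expectations, yielding the stated formula.

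The argument contains no real obstacle; the only subtle point is justifying that the frame success indicators and the granting events across cycles are independent so that the renewal reward theorem applies. This is underwritten by the assumption of independent Rayleigh fading across handshakes and by the fact that $P^s_i(T_i)$ is an average over the channel realizations during the transmission phase, which is independent of prior cycles.
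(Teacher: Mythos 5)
Your renewal--reward argument is correct and is precisely the formalization the paper has in mind: the text states that the result ``can be formally obtained from the renewal process'' and gives only the informal version (node $i$ transmits data for an average of $G_iT_i$ slots after every handshake period $T_0$), which your cycle decomposition with expected length $T_0+\sum_j G_jT_j$ and expected reward $G_iP^s_i(T_i)T_i$ makes rigorous. No gaps; the mutual exclusivity of the granting events (from $b>1$) and the i.i.d.\ structure across handshakes are exactly the points that need to be, and are, invoked.
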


Let $S_i(\textbf{p})$ denote the normalized average power consumption of node $i$ (normalized by the transmission power $P_T$). Then $S_i(\textbf{p})$ is equal to the fraction of time in which node $i$ transmits either RTS or data packets.
In the sequel, we will simply call $S_i(\textbf{p})$ the average power consumption of node $i$ for brevity.
By defining $\tilde{T}_0<T_0$ as the actual duration of an RTS packet, the following proposition can be easily obtained from \emph{Proposition \ref{Average_throughput}}.

\begin{proposition}
The (normalized) average power consumption of node $i$ is given by
\begin{align}
\label{Eq_Si}
S_i(\textbf{p})=\frac{p_i\tilde{T}_0+G_iT_i}{T_0+\sum_jG_jT_j},
\end{align}
where $\tilde{T}_0<T_0$ is the actual duration of an RTS packet.
\end{proposition}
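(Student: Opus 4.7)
The plan is to mirror the renewal-reward calculation underlying Proposition \ref{Average_throughput}, simply replacing the ``reward'' of successfully received data packets by the ``reward'' of slots actually occupied by node $i$'s own transmissions. A natural renewal cycle consists of one full handshake phase of $T_0$ slots followed by a (possibly empty) data transmission phase whose length is $T_j$ if some node $j$ is granted the channel and zero if no node is granted. By Proposition \ref{Prob_Granted} and the standing assumption $b>1$, at most one node can be granted per handshake, so the events ``node $j$ is granted'' are pairwise disjoint with probabilities $G_j$. Thus the expected cycle length equals $T_0+\sum_j G_j T_j$, exactly the denominator of $r_i(\textbf{p})$ in Proposition \ref{Average_throughput}.

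Next I would compute the expected number of slots per cycle during which node $i$ actually radiates, splitting it into contributions from the two phases. In the handshake phase, node $i$ transmits an RTS with probability $p_i$, and each such RTS occupies $\tilde{T}_0$ slots of transmission, contributing $p_i\tilde{T}_0$ to the per-cycle expected transmission time. In the data transmission phase, node $i$ radiates only when it is itself the granted node, which occurs with probability $G_i$ and lasts $T_i$ slots, contributing $G_iT_i$. These two contributions live in disjoint sub-intervals of the cycle (the handshake portion versus the post-handshake portion, noting $\tilde{T}_0<T_0$), so they may simply be added. Applying the renewal-reward theorem then yields the long-run fraction of time node $i$ transmits, which is exactly the right-hand side of \eqref{Eq_Si}; since the reward is measured in slots of transmission at power $P_T$, this fraction coincides with the power consumption normalized by $P_T$.

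The argument is essentially routine given the framework already established for Proposition \ref{Average_throughput}; the only point that requires any care is the disjointness of the two transmission contributions within a cycle and the mutual exclusivity of the granting events, both of which follow directly from the reception model and from $b>1$. No new calculation is needed beyond rescaling the reward assigned per cycle.
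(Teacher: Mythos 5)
Your proposal is correct and follows essentially the same route the paper intends: the paper states this proposition without a detailed argument, noting only that it "can be easily obtained" from the renewal-process derivation underlying Proposition \ref{Average_throughput}, and your renewal-reward computation (cycle length $T_0+\sum_j G_jT_j$, per-cycle transmission reward $p_i\tilde{T}_0+G_iT_i$) is precisely that derivation made explicit. Your observation that the frame success rate does not enter (since power is spent on transmission regardless of decoding success) is consistent with the paper's formula and worth having stated.
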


\subsection{Noncooperative Game Formulation}\label{game_formulation}
We use the concept of Nash equilibrium in game theory to formulate our problem.
The system can be modeled as a noncooperative game with constraints which are the average throughput demands. The nodes are the players, and the actions of a player (node) are: (i) selecting a request probability $p_i$ that can sustain the average throughput demand $\rho_i$ while minimizing the average power consumption $S_i$ in (\ref{Eq_Si}); (ii) transmitting an RTS packet with probability $p_i$ in every handshake phase. Note that action (i) has an action space $\{p_i: 0\leq p_i \leq 1\}$, while when $p_i$ has been chosen, action (ii) has only one element in the action space, that is to randomly transmit an RTS packet with probability $p_i$ in every handshake phase. Thus the Nash equilibrium will be analyzed with respect to the strategy for action (i), that is, what request probability to choose.
A Nash equilibrium point is a situation in which each node chooses its best strategy unilaterally to maximize its utility function (or minimize its cost function). The interested reader are referred to \cite{GameTheory94}\cite{GameTheory91} for further information about game theory.

Let $\textbf{p}_{-i}$ represent the vector of the request probabilities of all nodes except node $i$, and $r_i(p_i,\textbf{p}_{-i})$ represent the average throughput of node $i$ when it requests with probability $p_i$ given that the other nodes request with probability vector $\textbf{p}_{-i}$.
We define the utility function for node $i$ as $U_i(p_i,\textbf{p}_{-i})=1-S_i(\textbf{p})$ (which may be seen as the power left for node $i$), and the (constrained) Nash equilibrium point for our problem as follows.

\begin{definition}\label{def_NE}
A vector of the request probabilities $\textbf{p}$ is a (constrained) Nash equilibrium point if for all $i=1,\ldots,n$, we have
\begin{align}
\begin{cases}
r_i(p_i,\textbf{p}_{-i})\geq\rho_i\\
U_i(p_i,\textbf{p}_{-i})\ge U_i(\tilde{p}_i,\textbf{p}_{-i}),~\forall\tilde{p}_i\in\{\tilde{p}_i:r_i(\tilde{p}_i,\textbf{p}_{-i})\geq\rho_i\},
\end{cases}
\end{align}
where $\rho_i$, the average throughput demand, defines a constraint. 

Equivalently, $\textbf{p}$ is a Nash equilibrium point if
\begin{align}
p_i\in\arg\min_{0\le \tilde{p}_i\le 1}\{S_i(\tilde{p}_i,\textbf{p}_{-i}):r_i(\tilde{p}_i,\textbf{p}_{-i})\geq\rho_i\},~\forall i.
\end{align}
\end{definition}
The above expression means that at a Nash equilibrium point $\textbf{p}$, each node $i$ would not prefer to deviate from its choice of request probability.
It should be noted that our problem is a game with constraints,
so there are additional constraints in defining our Nash equilibrium point that differs from the conventional Nash equilibrium point.

Since $G_i(\textbf{p})$ is increasing in $p_i$, and decreasing in $p_j$ for $j\neq i$,
both the average throughput $r_i(\textbf{p})$ and the average power consumption $S_i(\textbf{p})$ are increasing in $p_i$. It follows that $(p_1,\ldots,p_n)$ (where $p_i \in [0,1], \forall i$) is a Nash equilibrium point if and only if it is a solution to the set of equations
\begin{align}\label{Equilibrium_equations}
r_i(\textbf{p})=\frac{P^s_i(T_i)G_iT_i}{T_0+\sum_jG_jT_j}=\rho_i,\ \forall i.
\end{align}

\emph{Remark}: The idea of Nash equilibrium point in game theory is from noncooperative interaction between nodes. Therefore, for most cases at the Nash equilibrium point, the system performance is suboptimal as compared to that at the traditional system-optimal solution.
In this paper, because the traditional system-optimal solution to satisfy the throughput demands $(\rho_1,\ldots,\rho_n)$ must also satisfy (\ref{Equilibrium_equations}), it is also a Nash equilibrium point defined above.

\section{Analysis of the Network}\label{Optimum Transmission Periods}
We now analyze the equilibrium equations. For conciseness of the derivation, let $\hat{\rho}_i \triangleq \frac{\rho_i}{P^s_i(T_i)}$. Taking summation of both sides of (\ref{Equilibrium_equations}), we have
\begin{align}
\label{Eq_rho_t}
&\frac{\sum_iG_iT_i}{T_0+\sum_jG_jT_j}=\sum_i\hat{\rho}_i\triangleq \rho_t\\
\Rightarrow\ &\sum_jG_jT_j=\frac{\rho_tT_0}{1-\rho_t}.\notag
\end{align}
Substituting this into (\ref{Equilibrium_equations}),
\begin{equation}
G_i=\frac{T_0\hat{\rho}_i}{T_i(1-\rho_t)}.
\label{Relation G_i and rho_i}
\end{equation}
Using \emph{Proposition \ref{Prob_Granted}}, we have the following proposition.
\begin{proposition}\label{Proposition NE equations}
Given the throughput demands $(\rho_1,\ldots,\rho_n)$, and let $\hat{\rho}_i \triangleq \frac{\rho_i}{P^s_i(T_i)}$ and $\sum_i\hat{\rho}_i\triangleq \rho_t$. The request probability vector $\textbf{p}=(p_1,\ldots,p_n)$ is a Nash equilibrium point if and only if
\begin{align}\label{Corresponding NE Equations_Extension}
\frac{T_0\hat{\rho}_i}{T_i(1-\rho_t)}=e^{-b\frac{N_0}{P_T}}p_i\prod_{j\neq{}i}\left(1-\frac{bp_j}{1+b}\right),\ \forall i.
\end{align}
\end{proposition}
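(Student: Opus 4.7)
The plan is to prove this as a straightforward chain of equivalences, since the real work has already been done in establishing (\ref{Equilibrium_equations}) and in computing the closed form of $G_i$ in Proposition \ref{Prob_Granted}. What remains is just to combine them, so I will organize the argument as a clean two-direction verification.

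First, for the forward direction, I would start from the characterization at the end of Section \ref{game_formulation}: $\mathbf{p}$ is a Nash equilibrium if and only if $r_i(\mathbf{p})=\rho_i$ for every $i$, i.e.\ the system (\ref{Equilibrium_equations}) holds. From there I would sum over $i$ exactly as the authors already did to obtain (\ref{Eq_rho_t}), isolate $\sum_j G_j T_j = \rho_t T_0/(1-\rho_t)$, and substitute back to recover the per-node relation (\ref{Relation G_i and rho_i}). Plugging the explicit product form of $G_i$ from Proposition \ref{Prob_Granted} into (\ref{Relation G_i and rho_i}) gives exactly (\ref{Corresponding NE Equations_Extension}).

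For the converse, I would assume (\ref{Corresponding NE Equations_Extension}) holds, invoke Proposition \ref{Prob_Granted} to rewrite the right-hand side as $G_i$, so that $G_i T_i (1-\rho_t) = T_0 \hat\rho_i$ for every $i$. Summing over $i$ yields $(1-\rho_t)\sum_j G_j T_j = T_0 \rho_t$, hence $T_0 + \sum_j G_j T_j = T_0/(1-\rho_t)$. Dividing the per-node identity by this common denominator recovers $r_i(\mathbf{p}) = P^s_i(T_i)\hat\rho_i = \rho_i$, so the equilibrium equations (\ref{Equilibrium_equations}) are satisfied and $\mathbf{p}$ is a Nash equilibrium by the characterization above.

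The main (and essentially only) obstacle is not a mathematical one but a bookkeeping one: making sure the two directions are presented symmetrically and that the definition of $\rho_t$ is handled consistently, since $\hat\rho_i$ depends on $T_i$ and $T_i$ appears on both sides of the relation. Once the algebraic rearrangement above is carried out, no inequality, fixed-point, or optimization argument is needed, and the proposition follows immediately from Proposition \ref{Prob_Granted} combined with (\ref{Equilibrium_equations}).
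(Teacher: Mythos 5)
Your proposal is correct and follows essentially the same route as the paper: the authors obtain (\ref{Corresponding NE Equations_Extension}) by summing the equilibrium equations (\ref{Equilibrium_equations}) to get (\ref{Eq_rho_t}), substituting back to get (\ref{Relation G_i and rho_i}), and inserting the closed form of $G_i$ from \emph{Proposition \ref{Prob_Granted}}. Your only addition is to write out the reverse direction explicitly, which the paper leaves implicit since every algebraic step is reversible (given $\rho_t<1$).
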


Again, for conciseness of the derivation, let $\hat{T}_0 \triangleq e^{b\frac{N_0}{P_T}}T_0$. Hereafter
(\ref{Corresponding NE Equations}) will be used.
\begin{align}\label{Corresponding NE Equations}
\frac{\hat{T}_0\hat{\rho}_i}{T_i(1-\rho_t)}=p_i\prod_{j\neq{}i}\left(1-\frac{bp_j}{1+b}\right),\ \forall i.
\end{align}

\subsection{Feasible Throughput Region}
\begin{definition}
A throughput demand vector $(\rho_1,\ldots,\rho_n)$ is called feasible if
there is a Nash equilibrium $\textbf{p}=(p_1,\ldots,p_n)$ for it, that is, there is a solution to (\ref{Corresponding NE Equations}).
The feasible throughput region when node $i$ uses data transmission period $T_i$ is defined as
\begin{equation*}
\Omega(T_1,\ldots,T_n)\triangleq\{(\rho_1,\ldots,\rho_n):\mbox{a Nash equilibrium exists for}\ (\rho_1,\ldots,\rho_n)\}.
\end{equation*}
\end{definition}

We will show that if there is a Nash equilibrium point for $(\rho_1,\ldots,\rho_n)$ when the data transmission periods are $(T_1,\ldots,T_n)$, then there is also a Nash equilibrium point for $(\rho_1,\ldots,\rho_n)$ when the data transmission periods are $(T'_1,\ldots,T'_n)$, where $T'_i\ge T_i$ for all $i$. In other words, we have
\begin{proposition}
\label{Proof Optimum data transmission periods}
If $T'_i\ge T_i\ \forall\ i$, then $\Omega(T_1,\ldots,T_n)\subset\Omega(T'_1,\ldots,T'_n)$.
\end{proposition}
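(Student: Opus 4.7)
The plan is to convert the hypothesis $T'_i \geq T_i$ into a coordinate-wise inequality $c'_i \leq c_i$ between the right-hand sides of (\ref{Corresponding NE Equations}), and then deform the given equilibrium $\textbf{p}$ into a new one by a Brouwer fixed-point argument on a box inside $[0,1]^n$.

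First, I let $c_i \triangleq \hat{T}_0 \hat{\rho}_i / (T_i(1-\rho_t))$ for the original data transmission periods and $c'_i \triangleq \hat{T}_0 \hat{\rho}'_i / (T'_i(1-\rho'_t))$ for the new ones, where $\hat{\rho}'_i \triangleq \rho_i/P^s_i(T'_i)$ and $\rho'_t \triangleq \sum_i \hat{\rho}'_i$. The paper's modeling postulate that $P^s_i(T_i)$ is non-decreasing in $T_i$ (justified earlier through the turbo-code/interleaving discussion) gives $\hat{\rho}'_i \leq \hat{\rho}_i$, hence $\rho'_t \leq \rho_t$ and $1-\rho'_t \geq 1-\rho_t$. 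Combined with $T'_i \geq T_i$, these three comparisons yield $c'_i \leq c_i$ for every $i$. The proposition thereby reduces to the following claim: if $F_i(\textbf{p}) \triangleq p_i \prod_{j \neq i}(1 - bp_j/(1+b))$ satisfies $F(\textbf{p}) = \textbf{c}$ for some $\textbf{p} \in [0,1]^n$, then for any $\textbf{0} \leq \textbf{c}' \leq \textbf{c}$ there is $\textbf{p}' \in [0,1]^n$ with $F(\textbf{p}') = \textbf{c}'$.

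To prove this claim I would apply Brouwer's fixed-point theorem on the box $B \triangleq \prod_i [0,p_i] \subseteq [0,1]^n$ to the continuous map
\begin{equation*}
\Phi_i(\textbf{q}) \triangleq \frac{c'_i}{\prod_{j \neq i}\left(1 - bq_j/(1+b)\right)}, \qquad \textbf{q} \in B.
\end{equation*}
Each denominator is strictly positive on $B$ because $p_j \leq 1 < (1+b)/b$, so $\Phi$ is well-defined and continuous, with $\Phi_i \geq 0$ trivially. The self-map bound $\Phi_i(\textbf{q}) \leq p_i$ is the heart of the argument: $q_j \leq p_j$ implies $\prod_{j \neq i}(1 - bq_j/(1+b)) \geq \prod_{j \neq i}(1 - bp_j/(1+b)) = c_i/p_i$ by the original equations (\ref{Corresponding NE Equations}), so $\Phi_i(\textbf{q}) \leq c'_i p_i / c_i \leq p_i$ by $c'_i \leq c_i$. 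Brouwer then supplies $\textbf{p}' \in B$ with $\Phi(\textbf{p}') = \textbf{p}'$, which is exactly (\ref{Corresponding NE Equations}) for $\textbf{c}'$; hence $\textbf{p}'$ is a Nash equilibrium for $(\rho_1,\ldots,\rho_n)$ under $(T'_1,\ldots,T'_n)$, that is, $(\rho_1,\ldots,\rho_n) \in \Omega(T'_1,\ldots,T'_n)$.

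The main obstacle I expect is the self-map property: lowering any $q_i$ inflates the denominators of the other $\Phi_j$'s but simultaneously shrinks the head factor inside $\Phi_i$, so naive constructions such as the uniform rescaling $\textbf{p}' = \alpha\textbf{p}$ fail to close the system. The right lever is precisely the lower bound $c_i/p_i$ on the denominator, read off from the original equation $F(\textbf{p}) = \textbf{c}$, combined with the coordinate-wise inequality $c'_i \leq c_i$. Boundary degeneracies (e.g., $\rho_i = 0$, whence $c_i = c'_i = p_i = 0$) simply collapse the $i$th factor of $B$ to a point and leave the argument intact.
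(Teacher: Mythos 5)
Your proof is correct, but it takes a genuinely different route from the paper's. The paper argues by a sequential monotone iteration: starting from the equilibrium $(p^{(0)}_1,\ldots,p^{(0)}_n)$ for $(T_1,\ldots,T_n)$, it repeatedly lowers one coordinate at a time to re-satisfy its own equation with the larger $T'_i$, obtains coordinate-wise decreasing sequences bounded below by zero, and passes to the limit; to handle the change of $P^s_i(T_i)$ to $P^s_i(T'_i)$ it first detours through inflated demands $\rho'_i\ge\rho_i$ chosen so that $\hat{\rho}_i$ is unchanged, and then invokes monotonicity of $r_i$ in $p_i$ to come back down to $(\rho_1,\ldots,\rho_n)$. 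You instead fold the change of $P^s_i$ and of $\rho_t$ directly into the single coordinate-wise inequality $c'_i\le c_i$ and prove the clean general lemma that solvability of $F(\textbf{p})=\textbf{c}$ in $[0,1]^n$ implies solvability of $F(\textbf{p}')=\textbf{c}'$ for any $\textbf{0}\le\textbf{c}'\le\textbf{c}$, via Brouwer on the box $\prod_i[0,p_i]$; the self-map bound $\Phi_i(\textbf{q})\le c'_ip_i/c_i\le p_i$ is exactly the right estimate, and your treatment of the degenerate $p_i=0$ coordinates is fine. What the paper's approach buys is a constructive, algorithm-flavored argument (a best-response-like update that converges monotonically to the new equilibrium) using only monotone sequences; what yours buys is brevity, a strictly more general monotonicity statement about the solution set in $\textbf{c}$, and the elimination of the paper's somewhat informal ``it can be easily verified'' step for reducing the demands from $\rho'_i$ back to $\rho_i$ --- at the cost of invoking Brouwer's theorem rather than elementary convergence.
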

\begin{proof}
Assume $(\rho_1,\ldots,\rho_n)\in\Omega(T_1,\ldots,T_n)$, i.e., there is a request probability vector $(p^{(0)}_1,\ldots,p^{(0)}_n)$ satisfying
\begin{align*}
\frac{\hat{T}_0\hat{\rho}_i}{T_i(1-\rho_t)}=p^{(0)}_i\prod_{j\neq{}i}\left(1-\frac{bp^{(0)}_j}{1+b}\right),\ \forall i.
\end{align*}
We will start from the request probability vector $(p^{(0)}_1,\ldots,p^{(0)}_n)$, and successively update the request probability vector to $(p^*_1,\ldots,p^*_n)$ such that
\begin{align}\label{throughput_region_proof}
\frac{\hat{T}_0\hat{\rho}_i}{T'_i(1-\rho_t)}=p^*_i\prod_{j\neq{}i}\left(1-\frac{bp^*_j}{1+b}\right),\ \forall i.
\end{align}
Note that $\hat{\rho}_i = \frac{\rho_i}{P^s_i(T_i)}$, and as discussed in Section~\ref{SINR_capture_model}, $P^s_i(T'_i) \geq P^s_i(T_i)$ when $T'_i \geq T_i, \forall i$. If we choose $(\rho'_1,\ldots,\rho'_n)$ such that $\hat{\rho}_i = \frac{\rho_i}{P^s_i(T_i)}= \frac{\rho'_i}{P^s_i(T'_i)}$, then we will have $\rho'_i \geq \rho_i, \forall i$. If a solution $(p^*_1,\ldots,p^*_n)$ to (\ref{throughput_region_proof}) exists, we can conclude that $(\rho'_1,\ldots,\rho'_n)\in\Omega(T'_1,\ldots,T'_n)$, that is, there is a Nash equilibrium for the throughput demands $(\rho'_1,\ldots,\rho'_n)$. Since the average throughput $r_i(\textbf{p})$ is increasing in $p_i$, in this case, it can be easily verified that there will also be a Nash equilibrium for $(\rho_1,\ldots,\rho_n)$ when $\rho_i \leq \rho'_i, \forall i$. Therefore, we have proved that $(\rho_1,\ldots,\rho_n)\in\Omega(T'_1,\ldots,T'_n)$.

We now prove (\ref{throughput_region_proof}). Assume that $T_i<T'_i$ for some $i$ (otherwise, we are done since $T_i=T'_i$ for all $i$). Without loss of generality, assume that $i=1$.
Note that $p_i\prod_{j\neq{}i}\left(1-\frac{bp_j}{1+b}\right)$ decreases as $p_i$ decreases and it increases as $p_j$, $j\neq i$, decreases. There exists $p^{(1)}_1 < p^{(0)}_1$ such that
\begin{align*}
\frac{\hat{T}_0\hat{\rho}_1}{T'_1(1-\rho_t)}=p^{(1)}_1\prod_{j\neq{}1}\left(1-\frac{bp^{(0)}_j}{1+b}\right).
\end{align*}
We update the request probability vector from $(p^{(0)}_1,\ldots,p^{(0)}_n)$ to $(p^{(1)}_1,p^{(0)}_2,\ldots,p^{(0)}_n)$.
Then, by fixing the request probabilities of all nodes but node $2$, there exists $p^{(1)}_2 < p^{(0)}_2$ such that
\begin{align*}
\frac{\hat{T}_0\hat{\rho}_2}{T'_2(1-\rho_t)}=p^{(1)}_2\left(1-\frac{bp^{(1)}_1}{1+b}\right)\prod^n_{j=3}\left(1-\frac{bp^{(0)}_j}{1+b}\right).
\end{align*}
The request probability vector is updated from $(p^{(1)}_1,p^{(0)}_2,\ldots,p^{(0)}_n)$ to $(p^{(1)}_1,p^{(1)}_2,p^{(0)}_3,\ldots,p^{(0)}_n)$. 
This process is repeated to update the request probabilities to $p^{(1)}_j<p^{(0)}_j$, for $j\ge 3$, each time by fixing the request probabilities of all nodes but node $j$, until the request probability vector becomes $(p^{(1)}_1,p^{(1)}_2,\ldots,p^{(1)}_n)$.

We then consider again the request probability of node $1$. There exist $p^{(2)}_1<p^{(1)}_1$ such that
\begin{align*}
\frac{\hat{T}_0\hat{\rho}_1}{T'_1(1-\rho_t)}=p^{(2)}_1\prod_{j\neq{}1}\left(1-\frac{bp^{(1)}_j}{1+b}\right).
\end{align*}
Repeating the same process, we will have the updated request probability vector $(p^{(2)}_1,p^{(2)}_2,\ldots,p^{(2)}_n)$.
By continuously updating the request probability vector, we will get decreasing sequences $p^{(0)}_i > p^{(1)}_i > p^{(2)}_i > \cdots$, for $i=1,\ldots,n$. Because the request probabilities are lower bounded by zero, $p^{(q)}_i\rightarrow\ p^*_i$ as $q\rightarrow\infty$ for $i=1,\ldots,n$. It follows that
\begin{align*}
\frac{\hat{T}_0\hat{\rho}_i}{T'_i(1-\rho_t)}=p^*_i\prod_{j\neq{}i}\left(1-\frac{bp^*_j}{1+b}\right),\ \forall i.
\end{align*}
\end{proof}


We now give some properties about the feasible throughput region.
\begin{theorem}\label{Thm_NE}
There are at most two Nash equilibrium points for any feasible throughput demands $(\rho_1,\ldots,\rho_n)$, and exactly one Nash equilibrium point, called the \emph{better} Nash equilibrium point, with $\sum^n_{i=1} p_i\le \frac{b+1}{b}$. In other words, given any feasible throughput demands $(\rho_1,\ldots,\rho_n)$ in the feasible throughput region, there is exactly one Nash equilibrium $(p_1,\ldots,p_n)$ with $\sum^n_{i=1}p_i\le\frac{b+1}{b}$ satisfying (\ref{Corresponding NE Equations}).
\end{theorem}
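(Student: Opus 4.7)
The plan is to reduce the system (\ref{Corresponding NE Equations}) to a single scalar polynomial equation, count its positive roots via Descartes' rule of signs, and identify the ``better'' root through a short derivative calculation. Setting $\alpha := \frac{b}{1+b}$ and $c_i := \frac{\hat{T}_0\hat{\rho}_i}{T_i(1-\rho_t)}$, the NE equations read $p_i\prod_{j\neq i}(1-\alpha p_j)=c_i$. Multiplying the $i$th equation by $(1-\alpha p_i)$ turns its left-hand side into $p_i\cdot P$, where $P:=\prod_j(1-\alpha p_j)$ is common to all $i$; hence every NE must satisfy $p_i=c_i/(P+\alpha c_i)$. Plugging this back into the definition of $P$ collapses the entire system to the scalar equation
\begin{equation*}
\prod_{j=1}^n(P+\alpha c_j)=P^{n-1},
\end{equation*}
a polynomial equation of degree $n$. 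This establishes a bijection between NEs and those positive roots $P$ for which $p_i=c_i/(P+\alpha c_i)\in[0,1]$ for every $i$.

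For the ``at most two'' claim, I would expand the polynomial as $P^n+(\alpha\sum_j c_j-1)P^{n-1}+\alpha^2\sigma_2 P^{n-2}+\cdots+\alpha^n\prod_j c_j$, where $\sigma_k$ denotes the $k$th elementary symmetric polynomial in $(c_1,\ldots,c_n)$. Every coefficient except possibly that of $P^{n-1}$ is strictly positive, so the sign sequence has at most two changes, and Descartes' rule of signs immediately yields at most two positive real roots, hence at most two NEs.

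For the ``exactly one with $\sum_i p_i\le (b+1)/b$'' claim, I would use the identity $1-\alpha p_i=P/(P+\alpha c_i)$ to obtain
\begin{equation*}
\alpha\sum_i p_i = n - P\sum_i\frac{1}{P+\alpha c_i},
\end{equation*}
which rewrites $\sum_i p_i\le 1/\alpha=(b+1)/b$ as $F'(P)\ge 0$, where $F(P):=\sum_j\log(P+\alpha c_j)-(n-1)\log P$. The polynomial equation is equivalent to $F(P)=0$, and $F(P)\to+\infty$ both as $P\to 0^+$ and as $P\to\infty$; therefore, if two distinct positive roots $P_1<P_2$ exist, necessarily $F'(P_1)<0<F'(P_2)$, while in the degenerate double-root case $F'=0$ at the unique root. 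Either way exactly one root satisfies $F'\ge 0$. A short monotonicity check $\partial p_i/\partial P=-c_i/(P+\alpha c_i)^2<0$ shows that validity $p_i\le 1$ at a smaller positive root forces validity at any larger one, so the larger (or unique) root always corresponds to a genuine NE whenever feasibility holds, and it is the claimed ``better'' NE.

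The main obstacle is spotting the multiplicative trick in the first step: viewed naively, (\ref{Corresponding NE Equations}) is a coupled multivariate polynomial system with no obvious bound on the number of solutions. Multiplying by the missing factor $(1-\alpha p_i)$ to expose the common product $P$ is what unlocks the one-variable reformulation, after which Descartes' rule of signs and an elementary derivative argument handle everything else.
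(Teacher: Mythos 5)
Your proof is correct, and it reaches the theorem by a genuinely different route than the paper. The paper's argument fixes the node $i$ with the largest normalized demand $\tilde{\rho}_i$, uses the ratio of any two equations in (\ref{Corresponding NE Equations}) to express each $p_j$ as an increasing function of $p_i$, and then shows that $g(p_i)=\ln p_i+\sum_{j\neq i}\ln(1-\alpha p_j)$ is unimodal with $g'(p_i)\ge 0$ exactly when $\sum_j p_j\le\frac{b+1}{b}$; both conclusions of the theorem then drop out of unimodality. Your substitution $P=\prod_j(1-\alpha p_j)$ performs the same one-dimensional reduction but lands on an explicit degree-$n$ polynomial $\prod_j(P+\alpha c_j)=P^{n-1}$, so the bound of two equilibria comes from Descartes' rule of signs rather than from a derivative computation, and your criterion $F'(P)\ge 0$ is the exact mirror image, under the decreasing map $p_i=c_i/(P+\alpha c_i)$, of the paper's criterion $g'(p_i)\ge 0$. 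What your version buys is an explicit algebraic certificate: the equilibria are the admissible positive roots of a concrete polynomial, which also makes the degenerate double-root case ($\sum_i p_i=\frac{b+1}{b}$ exactly) transparent. What the paper's version buys is that it never has to discuss root multiplicities, since unimodality automatically places at most one solution on each monotone branch. Two details to pin down in a polished write-up: first, the strict inequalities $F'(P_1)<0<F'(P_2)$ require both roots to be simple, which you should derive explicitly from the Descartes count (at most two positive roots counted with multiplicity, so two distinct roots are each simple, $F$ genuinely changes sign at each, and the sign pattern forced by $F\to+\infty$ at both ends is then $+,-,+$); second, nodes with $\hat{\rho}_i=0$, hence $c_i=0$, should be removed at the outset, as the paper also does, so that all $c_i>0$, the elementary symmetric functions $\sigma_k$ are strictly positive, and $F(P)\to+\infty$ as $P\to 0^+$ (the latter limit also tacitly assumes $n\ge 2$; the case $n=1$ is a single linear equation and is trivial).
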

\begin{proof}
In Appendix \ref{Proof_Thm_NE}.
\end{proof}
\emph{Remark}:
The better Nash equilibrium point is clearly the traditional system-optimal solution.

With \emph{Theorem \ref{Thm_NE}}, the following characterization of the feasible throughput region is easily obtained.
\begin{corollary}
\begin{align}
\Omega(T_1,\ldots,T_n)
=\left\{\left(\frac{P^s_1(T_1)G_1T_1}{T_0+\sum_jG_jT_j},\ldots,\frac{P^s_n(T_n)G_nT_n}{T_0+\sum_jG_jT_j}\right):
0\le p_i\le 1, \forall i,\sum_ip_i\le\frac{b+1}{b}\right\},
\end{align}
\end{corollary}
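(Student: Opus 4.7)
The plan is to prove the set equality by double inclusion, with Theorem \ref{Thm_NE} doing essentially all the work for the nontrivial direction. Both directions follow from matching the definition of $\Omega(T_1,\ldots,T_n)$ against the Nash equilibrium characterization in (\ref{Equilibrium_equations}), so there is no real calculation to perform.

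For the forward inclusion $\Omega(T_1,\ldots,T_n)\subseteq$ RHS, I would take an arbitrary feasible demand vector $(\rho_1,\ldots,\rho_n)\in \Omega(T_1,\ldots,T_n)$. By definition a Nash equilibrium exists, and Theorem \ref{Thm_NE} then selects the better Nash equilibrium $(p_1,\ldots,p_n)$ with $0\le p_i\le 1$ and $\sum_i p_i\le \frac{b+1}{b}$. Plugging this vector into (\ref{Equilibrium_equations}) exhibits each $\rho_i$ in precisely the form $\frac{P^s_i(T_i)G_iT_i}{T_0+\sum_jG_jT_j}$, so the tuple lies in the set on the right-hand side.

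For the reverse inclusion, I would start with any $(p_1,\ldots,p_n)$ satisfying $0\le p_i\le 1$ and $\sum_i p_i\le \frac{b+1}{b}$, and define $\rho_i\triangleq \frac{P^s_i(T_i)G_iT_i}{T_0+\sum_jG_jT_j}$. Then (\ref{Equilibrium_equations}) holds by construction, so $(p_1,\ldots,p_n)$ is itself a Nash equilibrium for the demand vector $(\rho_1,\ldots,\rho_n)$, witnessing membership in $\Omega(T_1,\ldots,T_n)$.

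There is no real obstacle here: the only nonroutine ingredient is the existence of the better Nash equilibrium within the simplex-like region $\sum_i p_i\le \frac{b+1}{b}$, which is exactly the content of Theorem \ref{Thm_NE}. The corollary is essentially a restatement of that theorem in terms of the image of the map $\mathbf{p}\mapsto (r_1(\mathbf{p}),\ldots,r_n(\mathbf{p}))$ restricted to $\{\mathbf{p}\in[0,1]^n:\sum_i p_i\le \frac{b+1}{b}\}$.
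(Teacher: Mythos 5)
Your proof is correct and matches the paper's (implicit) argument: the paper offers no written proof, stating only that the corollary is ``easily obtained'' from Theorem \ref{Thm_NE}, and your double-inclusion via the equilibrium equations (\ref{Equilibrium_equations}) together with the existence of the better Nash equilibrium is exactly the intended reasoning.
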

where $G_i=e^{-b\frac{N_0}{P_T}}p_i\prod_{j\neq i}(1-\frac{bp_j}{1+b})$.


\subsection{Power Consumption}
The following proposition gives the average power consumption at a Nash equilibrium point $\textbf{p}=(p_1,\ldots,p_n)$ for the throughput demands $(\rho_1,\ldots,\rho_n)$.
\begin{proposition}
The average power consumption of node $i$ at a Nash equilibrium point $\textbf{p}=(p_1,\ldots,p_n)$ for the feasible throughput demands $(\rho_1,\ldots,\rho_n)$ is given by
\begin{align}\label{Average_power_investment}
S_i(\textbf{p})=\hat{\rho}_i+\frac{\tilde{T}_0}{T_0}(1-\rho_t)p_i,
\end{align}
where $\tilde{T}_0<T_0$ is the actual duration of an RTS packet, $\hat{\rho}_i = \frac{\rho_i}{P^s_i(T_i)}$, and $\rho_t=\sum_i\hat{\rho}_i$.
\end{proposition}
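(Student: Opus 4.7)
The plan is to start from the general expression for the average power consumption derived earlier, namely
$$S_i(\textbf{p}) = \frac{p_i \tilde{T}_0 + G_i T_i}{T_0 + \sum_j G_j T_j},$$
and specialize it to a Nash equilibrium point by invoking the two identities already established from the equilibrium equations.

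First I would split $S_i(\textbf{p})$ into two additive pieces: a ``handshake'' piece $\frac{p_i \tilde{T}_0}{T_0 + \sum_j G_j T_j}$ and a ``transmission'' piece $\frac{G_i T_i}{T_0 + \sum_j G_j T_j}$. The transmission piece is immediate: the Nash equilibrium condition (\ref{Equilibrium_equations}) says that this quantity multiplied by $P^s_i(T_i)$ equals $\rho_i$, so after dividing by $P^s_i(T_i)$ it equals $\hat{\rho}_i$. This handles the first term in the target formula.

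Next I would evaluate the denominator $T_0 + \sum_j G_j T_j$ at the equilibrium. From (\ref{Eq_rho_t}) we already have $\sum_j G_j T_j = \frac{\rho_t T_0}{1-\rho_t}$, and therefore
$$T_0 + \sum_j G_j T_j \;=\; T_0 \left(1 + \frac{\rho_t}{1-\rho_t}\right) \;=\; \frac{T_0}{1-\rho_t}.$$
Substituting this into the handshake piece gives $\frac{p_i \tilde{T}_0 (1-\rho_t)}{T_0} = \frac{\tilde{T}_0}{T_0}(1-\rho_t) p_i$, which matches the second term of the claimed formula. Adding the two pieces yields the stated expression.

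There is no real obstacle here; the result is a direct substitution, and the only thing to verify carefully is that both identities used ($\frac{G_i T_i}{T_0+\sum_j G_j T_j} = \hat{\rho}_i$ and $T_0 + \sum_j G_j T_j = T_0/(1-\rho_t)$) are consequences solely of (\ref{Equilibrium_equations}), which holds by definition of the Nash equilibrium, so feasibility of $(\rho_1,\ldots,\rho_n)$ is the only hypothesis needed.
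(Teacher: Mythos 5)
Your proof is correct. It reaches the same decomposition as the paper — a data-transmission piece equal to $\hat{\rho}_i$ and an RTS piece equal to $\frac{\tilde{T}_0}{T_0}(1-\rho_t)p_i$ — but by a slightly different route: you substitute the equilibrium identities from (\ref{Equilibrium_equations}) and (\ref{Eq_rho_t}) directly into the previously established power formula (\ref{Eq_Si}), whereas the paper argues verbally by time accounting (the fraction of slots node $i$ spends transmitting data is $\hat{\rho}_i$, the handshake phase occupies a fraction $1-\rho_t$ of the slots, and node $i$ transmits RTS packets during a fraction $p_i\tilde{T}_0/T_0$ of that). The two arguments rest on exactly the same facts; yours has the merit of making explicit which displayed equations are being invoked and that feasibility (existence of a solution to the equilibrium equations) is the only hypothesis needed, while the paper's phrasing conveys the physical interpretation of each term. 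Your verification that the denominator equals $T_0/(1-\rho_t)$ and that the transmission piece equals $\hat{\rho}_i$ is accurate, so there is no gap.
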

\begin{proof}
In each time slot, the channel is in either the handshake phase or the transmission phase. Hence, at the Nash equilibrium point for the throughput demands $(\rho_1,\ldots,\rho_n)$, the fraction of time slots node $i$ transmits data equals to $\hat{\rho}_i$ (since the average frame success rate is $P^s_i(T_i)$), and the RTS/CTS handshake phase occupies a fraction $1-\sum_i\hat{\rho}_i$ of the total time slots. With node $i$'s request probability
$p_i$, the fraction of time in which node $i$ transmits RTS packets is $(1-\rho_t)p_i\tilde{T}_0/T_0$. The proposition follows since the average power consumption equals to the fraction of time node $i$ transmits RTS or data packets.
\end{proof}

Finally, we relate the total average power consumption to feasible throughput demands in an optimal system.
The key idea is in the following proposition:
\begin{proposition}
\label{Proposition_Equivalence_SumSi}
If node $i$ uses data transmission period $T_i$ for all $i$, then, within the feasible throughput region $\Omega(T_1,\ldots,T_n)$, the maximum total average power consumption $\sum_i S_i(\textbf{p})$ at the better Nash equilibrium point
is equal to the maximum total average power consumption in the region $\{(p_1,\ldots,p_n): \sum_ip_i\le \frac{b+1}{b}, 0\le p_i\le 1\}$.
\end{proposition}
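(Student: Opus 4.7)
The plan is to establish the equivalence by showing that the two maximizations are actually taken over the same underlying set of probability vectors. Define $\mathcal{P} \triangleq \{(p_1,\ldots,p_n) : 0 \le p_i \le 1\ \forall i,\ \sum_i p_i \le (b+1)/b\}$. I will exhibit a bijection $\Phi : \Omega(T_1,\ldots,T_n) \to \mathcal{P}$ sending each feasible throughput demand to its better Nash equilibrium, and observe that $\sum_i S_i$ takes the same value at corresponding elements. The result then follows in one line.

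Construction of $\Phi$ is supplied by \emph{Theorem \ref{Thm_NE}}: every feasible $(\rho_1,\ldots,\rho_n)\in\Omega(T_1,\ldots,T_n)$ has a unique better Nash equilibrium $\textbf{p}$ with $\sum_i p_i \le (b+1)/b$, which by definition lies in $\mathcal{P}$. For the inverse map $\Psi$, given an arbitrary $\textbf{p}\in\mathcal{P}$, set $\rho_i \triangleq r_i(\textbf{p})$ using \emph{Proposition \ref{Average_throughput}}. By construction, $\textbf{p}$ automatically satisfies the Nash equilibrium equations (\ref{Equilibrium_equations}) for these $\rho_i$, so $(\rho_1,\ldots,\rho_n) \in \Omega(T_1,\ldots,T_n)$; moreover, because $\sum_i p_i \le (b+1)/b$, \emph{Theorem \ref{Thm_NE}} identifies $\textbf{p}$ as the better Nash equilibrium of this demand. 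Hence $\Phi\circ\Psi = \mathrm{id}_{\mathcal{P}}$ and $\Psi\circ\Phi = \mathrm{id}_{\Omega}$.

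With the bijection in hand, note that $\sum_i S_i(\textbf{p})$, as given by (\ref{Eq_Si}), is a function of $\textbf{p}$ alone. Therefore the value of $\sum_i S_i$ at any feasible throughput demand, evaluated at its better Nash equilibrium $\textbf{p}=\Phi(\rho)$, is literally the same scalar as $\sum_i S_i(\textbf{p})$ computed directly at that $\textbf{p}\in\mathcal{P}$. Since $\Phi$ is a bijection onto $\mathcal{P}$, the two supremums agree. The only delicate step is verifying the surjectivity of $\Phi$ — namely, that every $\textbf{p}\in\mathcal{P}$ genuinely appears as a better Nash equilibrium for some demand in $\Omega$ rather than merely being a subset of the image — and this is exactly what the uniqueness-under-the-sum-constraint half of \emph{Theorem \ref{Thm_NE}} delivers through the construction of $\Psi$. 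The remaining work is routine bookkeeping.
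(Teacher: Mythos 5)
Your proposal is correct and is essentially the paper's own argument made explicit: the paper's proof simply cites Theorem~\ref{Thm_NE} and its corollary, which together give exactly the bijection you construct between feasible demands and their better Nash equilibria in $\{(p_1,\ldots,p_n):\sum_i p_i\le\frac{b+1}{b},\,0\le p_i\le 1\}$, under which $\sum_i S_i(\textbf{p})$ takes identical values. No gap; your write-up just spells out the surjectivity step that the paper leaves implicit.
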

\begin{proof}
The result follows directly from \emph{Theorem \ref{Thm_NE}} and its corollary.
\end{proof}

%

The following theorem gives an upper bound on the total average power consumption when $b>2$, and all nodes use the same data transmission period.
\begin{theorem}\label{Thm_PowerConsumption}
Assuming that the capture ratio $b>2$ and all nodes use the same data transmission period $MT_0$ and the same channel code, then for any feasible throughput demands $(\rho_1,\ldots,\rho_n)$ (i.e., $(\rho_1,\ldots,\rho_n)\in \Omega(MT_0,\ldots,MT_0)$), the total average power consumption $\sum_i S_i(\textbf{p})$ at the better Nash equilibrium point is upper bounded by
\begin{enumerate}
\item If $n=1$, then $S_1(p_1)\le \frac{M'+\beta}{M'+1}$.
\item If $n>1$, then
\begin{align}
\begin{cases}
   \sum_i S_i(\textbf{p})\le \frac{M'+\beta}{M'+1} & \text{if } \beta\le\frac{M'b(1-\Psi_{b,n})}{1+M'+M'b(1-\Psi_{b,n})},\\
   \sum_i S_i(\textbf{p})\le \frac{M'\Psi_{b,n}+\beta\frac{b+1}{b}}{M'\Psi_{b,n}+1} & \text{if }\frac{M'b(1-\Psi_{b,n})}{1+M'+M'b(1-\Psi_{b,n})}\le\beta\le\frac{b}{b+1},\\
   \sum_i S_i(\textbf{p})\le \frac{M'\Gamma(n)+\beta\frac{b+1}{b}}{M'\Gamma(n)+1} & \text{if }\beta\ge\frac{b}{b+1},
\end{cases}
\end{align}
\end{enumerate}
where $M'=Me^{-b\frac{N_0}{P_T}}$, $\beta=\frac{\tilde{T}_0}{T_0}$ is the RTS fraction, $\Gamma(n)=\frac{b+1}{b}(1-\frac{1}{n})^{n-1}$, and
\begin{align}
\Psi_{b,n}=\frac{(n-1)(1+b)^2-nb}{b(1+b)(n-1)}\left[\frac{bn+n-b-2}{(1+b)(n-1)}\right]^{n-2}
\end{align}
\end{theorem}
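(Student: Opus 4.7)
The plan is to reduce $\sum_i S_i$ to a function of the two scalars $s = \sum_i p_i$ and $H = \sum_j p_j \prod_{k\neq j}\bigl(1 - \frac{bp_k}{1+b}\bigr)$, and then determine which boundary configuration of the admissible $(s,H)$-region attains the maximum, as a function of $\beta$. Using (\ref{Average_power_investment}) and the fact that all $T_i = MT_0$ with the same channel code (so $\hat\rho_i = \rho_i/P^s$), summation gives $\sum_i S_i = \rho_t + \beta(1-\rho_t)s$. Summing (\ref{Corresponding NE Equations}) over $i$ yields $\rho_t/(1-\rho_t) = M'H$, hence
\begin{align*}
\sum_i S_i(\textbf{p}) = \frac{M'H + \beta s}{1 + M'H}.
\end{align*}
By \emph{Proposition \ref{Proposition_Equivalence_SumSi}}, the maximum may be taken over the relaxed region $\mathcal{R} = \{\textbf{p}: 0 \le p_i \le 1,\ \sum_i p_i \le (b+1)/b\}$, free of the Nash-equilibrium side conditions.

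For $n = 1$, $H = p_1 = s$ and $\sum S_i = (M'+\beta)p_1/(1+M'p_1)$ is strictly increasing in $p_1\in[0,1]$, attaining the stated bound at $p_1=1$. For $n \ge 2$ I would first note that at fixed $s$ the objective is monotone in $H$, with $\partial/\partial H$ having the sign of $1-\beta s$. I then need the upper envelope $H_{\max}(s)$ and lower envelope $H_{\min}(s)$ on each $s$-slice of $\mathcal{R}$. Since $H$ is affine in each $p_i$ separately, a KKT analysis gives $H_{\max}(s)=s$ for $s\in[0,1]$ (concentrate on one node), and for $s\in[1,(b+1)/b]$ the ``one saturated, rest equal'' configuration $p_1=1$, $p_j=(s-1)/(n-1)$ for $j\ge 2$ solves the first-order conditions and attains $H_{\max}((b+1)/b) = \Psi_{b,n}$ (direct substitution gives exactly the stated formula for $\Psi_{b,n}$). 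For the lower envelope, the change of variables $a_i = bp_i/(1+b)$ rewrites $H$ as $\frac{1+b}{b}\sum_i a_i\prod_{j\ne i}(1-a_j)$; a second-order expansion at the symmetric point $a_i=1/n$ (Hessian positive on the hyperplane $\sum a_i=\text{const}$) shows it is the minimizer, giving $H_{\min}((b+1)/b) = \Gamma(n) = \frac{b+1}{b}(1-1/n)^{n-1}$.

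Finally I optimize over $s$ in each $\beta$-regime. When $\beta\le b/(b+1)$, $\beta s\le 1$ throughout $\mathcal{R}$, so the objective tracks the upper envelope. On $[0,1]$ it equals $(M'+\beta)s/(1+M's)$, increasing to $(M'+\beta)/(M'+1)$ at $s=1$; on $[1,(b+1)/b]$ a one-variable monotonicity check along $(s,H_{\max}(s))$ shows the supremum is attained at one of the two endpoints $s=1$ or $s=(b+1)/b$, giving the competing candidate $(M'\Psi_{b,n}+\beta(b+1)/b)/(M'\Psi_{b,n}+1)$. Equating the two candidates and solving for $\beta$ delivers the crossover $\beta_1 = M'b(1-\Psi_{b,n})/[1+M'+M'b(1-\Psi_{b,n})]$, which separates cases (a) and (b). When $\beta>b/(b+1)$, $\beta s>1$ near $s=(b+1)/b$ so the objective becomes decreasing in $H$ there, and the optimum switches to the lower-envelope corner $(s,H_{\min}(s))=((b+1)/b,\Gamma(n))$, producing case (c); note that at the boundary $\beta=b/(b+1)$ the formulas for cases (b) and (c) both reduce to $1$, as they must.

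The main obstacle is the envelope characterization above: although $H$ is multilinear, under the equality-in-sum constraint it is a genuine polynomial in the reduced coordinates, so the extrema are not automatically at vertices and one must verify via KKT and a second-order check that the ``one saturated, rest equal'' and the symmetric configurations are the true global extrema on the slice $\{\sum p_i = (b+1)/b\}$ (and rule out alternatives such as two nodes at intermediate levels with the rest zero). The hypothesis $b>2$ enters here: it guarantees feasibility of the spread solution ($p_j=1/(b(n-1))\le 1$), ensures the Lagrange multiplier at the spread configuration is consistent with the boundary condition $p_1=1$ (via $\partial H/\partial p_1 \ge \lambda$), and is also what keeps the single-endpoint monotonicity on the upper envelope over $[1,(b+1)/b]$ valid so that the crossover calculation reduces to comparing the two endpoint candidates.
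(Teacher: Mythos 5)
Your reduction to the two scalars $s=\sum_i p_i$ and $H=\sum_j p_j\prod_{k\neq j}\left(1-\frac{bp_k}{1+b}\right)$, the relaxation to $\{\textbf{p}:0\le p_i\le 1,\ \sum_i p_i\le\frac{b+1}{b}\}$ via Proposition \ref{Proposition_Equivalence_SumSi}, the observation that the objective is monotone in $H$ with sign governed by $1-\beta s$, and the resulting case split on $\beta$ are exactly the paper's strategy (its Lemma \ref{Lemma bounds on sum G_i} is your envelope characterization). The genuine gap is in the upper envelope for $s\in[1,\frac{b+1}{b}]$. Your KKT analysis on the slice $\{p_1=1,\ \sum_{j\ge 2}p_j=s-1\}$ identifies the interior critical point $(1,\frac{s-1}{n-1},\ldots,\frac{s-1}{n-1})$, but it does not eliminate the boundary configurations of that slice in which some coordinates vanish, i.e. the whole family $(1,\frac{s-1}{k},\ldots,\frac{s-1}{k},0,\ldots,0)$ for $k=1,\ldots,n-1$, each of which is the interior critical point of a reduced problem and all of which must be compared. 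The paper retains all of these candidates and reduces the comparison to the inequality $\Psi_{b,2}<\Psi_{b,3}<\cdots<\Psi_{b,n}$ for $b>2$, which the authors state they could not prove analytically and verify only by plotting a polynomial in $b$. So the step you present as a routine consequence of KKT is precisely the step the paper itself cannot close rigorously; asserting ``one saturated, the remaining $n-1$ equal'' without ruling out the intermediate configurations is a real hole, not a formality.

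Two smaller issues. First, along the upper envelope on $s\in[1,\frac{b+1}{b}]$ the objective is not monotone; the paper's proof that the supremum sits at one of the two endpoints is a nontrivial computation (it compares two $x$-dependent thresholds $\beta_1\ge\beta_2$ using the monotonicity of $\Phi_{b,n}$ and the concavity of an auxiliary function $U$), so ``a one-variable monotonicity check'' understates what must be verified. Second, in the regime $\beta\ge\frac{b}{b+1}$ there is still a sub-region with $\beta s\le 1$ where the objective follows the upper envelope; the paper disposes of it by showing $\Gamma(n)\le\Psi_{b,n}$ so that the case-(b) bound is dominated by the case-(c) bound, and on the lower envelope it needs a separate lemma (Lemma \ref{Increase_rho_increase_S}) to place the maximum at $s=\frac{b+1}{b}$. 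Neither step appears in your sketch, and your Hessian argument at the symmetric point only gives local, not global, minimality of $H$ on the slice.
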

\begin{proof}
In Appendix \ref{Proof_Thm_PowerConsumption}.
\end{proof}

From the proof of \emph{Theorem \ref{Thm_PowerConsumption}}, we know that the bound is tight, i.e., the equality of the total average power consumption given in \emph{Theorem \ref{Thm_PowerConsumption}} can be satisfied by a point in the feasible throughput region (with $(p_1,\ldots,p_n) \in$ \{$(1,0,\ldots,0)$, $(1,\frac{1}{(n-1)b},\ldots,\frac{1}{(n-1)b})$, $(\frac{b+1}{nb},\ldots,\frac{b+1}{nb})$, and their permutations\}) when all nodes use the same data transmission period and channel code. For the general case, we give the following upper bound on the total average power consumption.
\begin{corollary}
Assuming that the capture ratio $b>2$ and node $i$ uses data transmission period $T_i$, then for any feasible throughput demands $(\rho_1,\ldots,\rho_n)\in \Omega(T_1,\ldots,T_n)$, the total average power consumption $\sum_i S_i(\textbf{p})$ at the better Nash equilibrium point is upper bounded by
\begin{enumerate}
\item If $n=1$, then $S_1(p_1)\le \frac{\overline{m}'+\beta}{\overline{m}'+1}$.
\item If $n>1$, then
\begin{align}\label{general bound}
\begin{cases}
   \sum_i S_i(\textbf{p})\le \frac{\overline{m}'+\beta}{\overline{m}'+1} & \text{if } \beta\le\frac{M'b(1-\Psi_{b,n})}{1+M'+M'b(1-\Psi_{b,n})},\\
   \sum_i S_i(\textbf{p})\le \frac{\overline{m}'\Psi_{b,n}+\beta\frac{b+1}{b}}{\overline{m}'\Psi_{b,n}+1} & \text{if }\frac{M'b(1-\Psi_{b,n})}{1+M'+M'b(1-\Psi_{b,n})}\le\beta\le\frac{b}{b+1},\\
   \sum_i S_i(\textbf{p})\le \frac{\underline{m}'\Gamma(n)+\beta\frac{b+1}{b}}{\underline{m}'\Gamma(n)+1} & \text{if }\beta\ge\frac{b}{b+1},
\end{cases}
\end{align}
\end{enumerate}
where $M'$, $\beta$, $\Gamma(n)$, $\Psi_{b,n}$ are defined as in \emph{Theorem \ref{Thm_PowerConsumption}}, $\overline{m}'=\overline{m}e^{-b\frac{N_0}{P_T}}$ with $\overline{m}=\max_i\left\{\frac{T_i}{T_0}\right\}$ and $\underline{m}'=\underline{m}e^{-b\frac{N_0}{P_T}}$ with $\underline{m}=\min_i\left\{\frac{T_i}{T_0}\right\}$.
\end{corollary}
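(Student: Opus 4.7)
The plan is to reduce the heterogeneous case to Theorem~\ref{Thm_PowerConsumption} by sandwiching the actual total power consumption between two homogeneous surrogates---one in which every node uses transmission period $\underline{m}T_0$ and one in which every node uses $\overline{m}T_0$. First I would sum (\ref{Average_power_investment}) over $i$ to obtain $\sum_i S_i(\textbf{p})=\rho_t+\beta(1-\rho_t)\pi$ with $\pi\triangleq\sum_i p_i$, and then use (\ref{Eq_rho_t}) together with $m_j\triangleq T_j/T_0$ to rewrite $\rho_t=A/(1+A)$ where $A\triangleq\sum_j G_j m_j$. The total power then takes the compact form $\sum_i S_i(\textbf{p})=F(A,\pi)$ with $F(A,\pi)\triangleq(A+\beta\pi)/(1+A)$. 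Since $\underline{m}\le m_j\le\overline{m}$, introducing $X(\textbf{p})\triangleq\sum_j p_j\prod_{k\neq j}(1-bp_k/(1+b))$ gives $\underline{m}' X(\textbf{p})\le A(\textbf{p})\le\overline{m}' X(\textbf{p})$.

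The key structural observation is $\partial F/\partial A=(1-\beta\pi)/(1+A)^2$, so $F(\cdot,\pi)$ is monotonic in $A$ with a sign determined solely by $1-\beta\pi$. Consequently, for every feasible $\textbf{p}$, $\sum_i S_i(\textbf{p})\le\max\{\sum_i S_i^{(\overline{m})}(\textbf{p}),\ \sum_i S_i^{(\underline{m})}(\textbf{p})\}$, where $\sum_i S_i^{(m)}$ denotes the homogeneous total power consumption when all transmission periods equal $mT_0$. For the first two regimes ($\beta\le b/(b+1)$), I would exploit the Nash-equilibrium constraint $\pi\le(b+1)/b$ from Theorem~\ref{Thm_NE} to conclude $\beta\pi\le 1$ uniformly; $F$ is then non-decreasing in $A$ on the entire feasible set, yielding $\sum_i S_i(\textbf{p})\le\sum_i S_i^{(\overline{m})}(\textbf{p})$ pointwise. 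Applying Theorem~\ref{Thm_PowerConsumption} with $M'$ replaced by $\overline{m}'$ then produces the first two bounds, with the case split occurring at $\beta=\overline{m}' b(1-\Psi_{b,n})/(1+\overline{m}'+\overline{m}' b(1-\Psi_{b,n}))$.

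For the third regime ($\beta\ge b/(b+1)$), I would take the supremum of the pointwise max bound over feasible $\textbf{p}$, giving $\sum_i S_i(\textbf{p})\le\max\{S^{*}(\overline{m}),\ S^{*}(\underline{m})\}$, where $S^{*}(m)$ is the homogeneous upper bound from Theorem~\ref{Thm_PowerConsumption}. Since $\beta\ge b/(b+1)$, both $S^{*}(\overline{m})$ and $S^{*}(\underline{m})$ fall into case 3 of the theorem and share the form $(M'\Gamma(n)+\beta(b+1)/b)/(M'\Gamma(n)+1)$. Differentiating in $M'$ shows the derivative is proportional to $1-\beta(b+1)/b$, which is non-positive precisely when $\beta\ge b/(b+1)$; hence $S^{*}(\underline{m})\ge S^{*}(\overline{m})$, and the desired bound involving $\underline{m}'$ follows.

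The main obstacle is the third regime: because the direction of monotonicity of $F$ in $A$ depends on $\textbf{p}$ through $\beta\pi$ and is not determined by $\beta$ alone, no single homogeneous surrogate can be chosen a priori. The clean workaround is to upper-bound pointwise by the max of the two homogeneous surrogates and then verify, via the sign of $\partial S^{*}/\partial M'$ in case 3 of Theorem~\ref{Thm_PowerConsumption}, that the $\underline{m}'$ branch dominates. The remaining work in the first two regimes is routine once the elementary observation $\beta\pi\le\beta(b+1)/b\le 1$ is in hand.
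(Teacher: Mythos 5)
Your proposal is correct and follows essentially the same route as the paper: rewriting the total power as $(A+\beta\pi)/(1+A)$, sandwiching $A=\sum_j G_j T_j/T_0$ between $\underline{m}\sum_j G_j$ and $\overline{m}\sum_j G_j$ with the direction of monotonicity governed by the sign of $1-\beta\pi$, reducing to Theorem~\ref{Thm_PowerConsumption} for the homogeneous surrogates, and using the fact that $\frac{x+\beta\frac{b+1}{b}}{x+1}$ is decreasing in $x$ when $\beta\frac{b+1}{b}\ge 1$ to show the $\underline{m}'$ branch dominates in the third regime. Your packaging of the third regime as a pointwise max of the two surrogates followed by a monotonicity-in-$M'$ comparison is just a mild reorganization of the paper's case split on $\beta\sum_i p_i\lessgtr 1$.
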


In particular, we have
\begin{align}
\sum_i S_i(\textbf{p})\le \max\left\{1,\beta\frac{b+1}{b}\right\}.
\end{align}
\begin{proof}
When $n=1$, the result follows directly from \emph{Theorem \ref{Thm_PowerConsumption}}.
When $n>1$, by (\ref{Eq_rho_t}) and (\ref{Average_power_investment}), we have
\begin{align}
\sum_iS_i(\textbf{p})=\frac{\beta T_0(\sum_ip_i)+\sum_iG_iT_i}{T_0+\sum_iG_iT_i},
\end{align}
where $\beta=\frac{\tilde{T}_0}{T_0}$ and $G_i=e^{-b\frac{N_0}{P_T}}p_i\prod_{j\neq i}(1-\frac{bp_j}{1+b})$.
\begin{enumerate}
\item[(i)] If $\beta(\sum_ip_i)\le 1$, we have
$
\sum_iS_i(\textbf{p})\le\frac{\beta \left(\sum_ip_i\right)+\overline{m}\sum_jG_j}{1+\overline{m}\sum_jG_j}.
$
\item[(ii)] If $\beta(\sum_ip_i)\ge 1$, we have
$
\sum_iS_i(\textbf{p})\le\frac{\beta \left(\sum_ip_i\right)+\underline{m}\sum_jG_j}{1+\underline{m}\sum_jG_j}.
$
\end{enumerate}
When $\beta\left(\frac{b+1}{b}\right) \le 1$, we always have case (i) for all values of $\sum_i p_i$.
In this case, let the maximum of the total average power consumption $\sum_iS_i(\textbf{p})$ at the better Nash equilibrium maximized over the feasible throughput demands $(\rho_1,\ldots,\rho_n)\in\Omega(T_1,\ldots,T_n)$ be $\widetilde{S}$.
By \emph{Proposition \ref{Proposition_Equivalence_SumSi}}, we have
\[
\widetilde{S}\le \max_{\left\{(p_1,\ldots,p_n):\sum p_i\le\frac{b+1}{b},0\le p_i\le 1\right\}}
\frac{\beta (\sum_ip_i)+\overline{m}\sum_jG_j}{1+\overline{m}\sum_jG_j}.
\]
Note that the right-hand side of the inequality equals to the maximum total average power consumption at the better Nash equilibrium point of the case when all nodes use the same data transmission period $\overline{m}T_0$.
By \emph{Theorem \ref{Thm_PowerConsumption}}, the first two inequalities in (\ref{general bound}) can be obtained.

When $\beta\left(\frac{b+1}{b}\right) \ge 1$, both case (i) and case (ii) can happen. If $\beta(\sum_ip_i)\le 1$ (i.e., case (i)), we know from the second inequality of (\ref{general bound}) and the proof of \emph{Theorem \ref{Thm_PowerConsumption}} that
\begin{equation*}
\sum_i S_i (\textbf{p}) \le \frac{\overline{m}'\Psi_{b,n}+\beta\frac{b+1}{b}}{\overline{m}'\Psi_{b,n}+1} \le \frac{\underline{m}'\Psi_{b,n}+\beta\frac{b+1}{b}}{\underline{m}'\Psi_{b,n}+1} \le \frac{\underline{m}'\Gamma(n)+\beta\frac{b+1}{b}}{\underline{m}'\Gamma(n)+1}.
\end{equation*}

In the case when $\beta(\frac{b+1}{b})\ge 1$ and $\beta(\sum_ip_i)\ge 1$, again, let the maximum of the total average power consumption $\sum_iS_i(\textbf{p})$ at the better Nash equilibrium maximized over the feasible throughput demands $(\rho_1,\ldots,\rho_n)\in\Omega(T_1,\ldots,T_n)$ be $\widetilde{S}$.
By \emph{Proposition \ref{Proposition_Equivalence_SumSi}}, we have
\[
\widetilde{S}\le\max_{\left\{(p_1,\ldots,p_n):\sum p_i\le\frac{b+1}{b},0\le p_i\le 1\right\}}
\frac{\beta (\sum_ip_i)+\underline{m}\sum_jG_j}{1+\underline{m}\sum_jG_j}.
\]
The right-hand side of the inequality equals to the maximum total average power consumption at the better Nash equilibrium point of the case when all nodes use the same data transmission period $\underline{m}T_0$.
By \emph{Theorem \ref{Thm_PowerConsumption}}, and together with the above result for the $\beta(\frac{b+1}{b})\ge 1$ and $\beta(\sum_ip_i)\le 1$ case, the last inequality in (\ref{general bound}) can be obtained.

%
\end{proof}

\section{Conclusion}\label{Conclusion}
In this paper, we analyzed the feasible throughput region and power consumption of a CSMA-based network with heterogenous nodes, where the MAC protocol is CSMA with the RTS/CTS handshake.
The feasible throughput region in this network was characterized, and an upper bound of the total power consumption was provided for any throughput demands in the feasible throughput region. Specifically, the upper bound is satisfied by one of three points in the feasible throughput region depending on the RTS fraction when the lengths of the data transmission periods for all nodes are equal.

\appendix
\subsection{\textbf{Proof of Theorem \ref{Thm_NE}}}
\label{Proof_Thm_NE}
Let $\alpha=\frac{b}{1+b}$ and $\tilde{\rho}_i=\frac{\hat{T}_0\rho_i}{P^s_i(T_i)T_i(1-\rho_t)}$, both being constants determined by the system parameters. To show that the system of equations in (\ref{Corresponding NE Equations}) have at most two solutions
is equivalent to showing that there are at most two solutions of $(p_1,\ldots,p_n)$ satisfying
\begin{align}\label{Throughput_i_SINR}
\tilde{\rho}_i=p_i\prod_{j\neq{}i}\left(1-\alpha{}p_j\right),\ \forall i.
\end{align}
In addition, we need to show that if a solution exists, there is exactly one solution with $\sum_ip_i\le \frac{b+1}{b}$.
Without loss of generality, assume $\tilde{\rho}_i=\max_j\{\tilde{\rho}_j\}$ and $\min_{j}\{\tilde{\rho}_j\}>0$ (note: the node with throughput demand $0$ transmits RTS packets with probability $0$, and can be excluded without affecting the proof).

By (\ref{Throughput_i_SINR}), we have
\begin{align}\label{Eq_p_jtop_i_1}
\frac{p_j}{1-\alpha{}p_j}=\frac{\tilde{\rho}_j}{\tilde{\rho}_i}\frac{p_i}{1-\alpha{}p_i}\\
\label{Eq_p_jtop_i_2}\Rightarrow{}p_j=\frac{\frac{\tilde{\rho}_j}{\tilde{\rho}_i}p_i}{1-\alpha{}p_i+\alpha{}\frac{\tilde{\rho}_j}{\tilde{\rho}_i}p_i}.
\end{align}
This means that once $p_i$ is determined, $p_j$ is uniquely determined at the Nash equilibrium point, and $p_j$ increases if $p_i$ increases. Taking logarithm and then differentiating with respect to $p_i$ on both sides of (\ref{Eq_p_jtop_i_1}), we have
\begin{align}\label{derivative_of_p_j}
\left(\frac{1}{p_j}+\frac{\alpha}{1-\alpha{}p_j}\right)\frac{dp_j}{dp_i}=\frac{1}{p_i}+\frac{\alpha}{1-\alpha{}p_i}.
\end{align}
Taking logarithm on both sides of (\ref{Throughput_i_SINR}), we have
\begin{align*}
\ln{}\tilde{\rho}_i=\ln{}p_i+\sum_{j\neq{}i}\ln{}[1-\alpha{}p_j]\triangleq{}g(p_i).
\end{align*}
Recall that $p_j$ can be seen as a function of $p_i$ by (\ref{Eq_p_jtop_i_2}) and note that $p_j\le p_i$ since $\tilde{\rho}_i=\max_j\{\tilde{\rho}_j\}$.

We will show that there exist one or two solutions for $p_i$ with $0\le p_i\le 1$ given any feasible throughput demands.
Specifically, we will show that $g(p_i)$ is a \emph{unimodal function} (i.e., having only one local maximum, and the point at which the maximum occurs is called the mode) in $p_i$. The derivative of $g(p_i)$ is given by
\begin{align*}
\frac{dg(p_i)}{dp_i}=\frac{1}{p_i}-\sum_{j\neq{}i}\left(\frac{\alpha}{1-\alpha{}p_j}\frac{dp_j}{dp_i}\right).
\end{align*}
Using (\ref{derivative_of_p_j}), it follows that
\begin{align*}
\frac{dg(p_i)}{dp_i}=\frac{1}{p_i}-\sum_{j\neq{}i}\left(\frac{1}{p_i}+\frac{\alpha}{1-\alpha{}p_i}\right)\alpha{}p_j.
\end{align*}
The function $g(p_i)$ is increasing if and only if $\frac{dg(p_i)}{dp_i}\geq{}0$, that is,
\begin{align*}
&\sum_{j\neq{}i}\alpha{}p_j\leq\frac{\frac{1}{p_i}}{\frac{1}{p_i}+\frac{\alpha}{1-\alpha{}p_i}}=1-\alpha{}p_i\\
&\iff{}\sum^n_{j=1}p_j\leq\frac{1}{\alpha}=\frac{b+1}{b}.
\end{align*}
Similarly, we have that $g(p_i)$ is decreasing if $\sum^n_{j=1}p_j\geq{}\frac{b+1}{b}$.
Since $\sum^n_{j=1}p_j$ is an increasing function in $p_i$ (recall that $p_j$ increases if $p_i$ increases, $\forall{}j$), $g(p_i)$ is a unimodal function. Also recall that $p_j, \forall j$, is uniquely determined by $p_i$. It follows that there are at most two solutions given any feasible throughput demands, and exactly one is with $\sum^n_{j=1}p_j\leq{}\frac{b+1}{b}$ (the other with $\sum^n_{j=1}p_j\geq{}\frac{b+1}{b}$ if there are two solutions).
In summary,
we can achieved any feasible throughput demands by the request probabilities satisfying $\sum^n_{i=1}p_i\leq{}\frac{b+1}{b}$.

\subsection{\textbf{Proof of Theorem \ref{Thm_PowerConsumption}}}
\label{Proof_Thm_PowerConsumption}
For the case $n=1$, the average power consumption $S_1(p_1)$ can be obtained by using (\ref{Eq_G_i}) and (\ref{Eq_Si}). And it is straightforward to see that the maximum of $S_1(p_1)$ occurs when $p_1 = 1$.

We will prove for the $n\ge 2$ case in the following.
By (\ref{Relation G_i and rho_i}) and (\ref{Corresponding NE Equations}), the Nash equilibrium point $(p_1,\ldots,p_n)$ has the following relation when data transmission periods $T_i=MT_0$ for all $i$
\begin{align*}
\hat{G}_i=p_i\prod_{j\neq{}i}\left(1-\frac{bp_j}{1+b}\right),\ \forall i,
\end{align*}
where $\hat{G}_i=\frac{\hat{\rho}_ie^{b\frac{N_0}{P_T}}}{M(1-\rho_t)}=G_i e^{b\frac{N_0}{P_T}}$ is defined to make the following proof concise.

We first give some lemmas required to complete the proof.
\begin{lemma}\label{Lemma bounds on sum G_i}
Given fixed $\sum_{i=1}^np_i=C$ at the better Nash equilibrium point (i.e., $0\le p_i\le 1$ and $C\le\frac{b+1}{b}$), then the minimum of $\sum^n_{i=1}\hat{G}_i$ can be achieved by $(p^*_1,\ldots,p^*_n)=(\frac{C}{n},\ldots,\frac{C}{n})$ and the maximum of $\sum^n_{i=1}\hat{G}_i$ can be achieved by one of the following points
\begin{enumerate}
\item when $C\le 1$: $(p^*_1,\ldots,p^*_n)\in$ \{$(C,0,\ldots,0)$ and its permutations\}
\item when $C\ge 1$: $(p^*_1,\ldots,p^*_n)\in$ \{$(1,C-1,0,\ldots,0)$, $(1,\frac{C-1}{2}$, $\frac{C-1}{2},0,\ldots,0),\ldots,$ $(1,\frac{C-1}{n-1},\ldots,\frac{C-1}{n-1})$, and their permutations\}
\end{enumerate}
\end{lemma}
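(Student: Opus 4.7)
My plan is to locate the extrema of $f(p_1,\ldots,p_n)\triangleq\sum_i\hat{G}_i=\sum_i p_i\prod_{j\neq i}(1-\alpha p_j)$, where $\alpha\triangleq b/(1+b)$, on the slice $\{(p_1,\ldots,p_n):\sum_i p_i=C,\ 0\le p_i\le 1\}$. I combine a Lagrange multiplier characterization of interior critical points with a boundary-reduction induction on $n$.

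For the interior analysis, a direct computation gives $\partial f/\partial p_i=(P_i-\alpha f)/(1-\alpha p_i)$ with $P_i\triangleq\prod_{j\neq i}(1-\alpha p_j)$. Writing $Q\triangleq\prod_j(1-\alpha p_j)$ and $u_i\triangleq 1-\alpha p_i$, the Lagrange stationarity condition $\partial f/\partial p_i=\lambda$ for every $i$ collapses to the single quadratic
\[
\lambda u_i^2+\alpha f\,u_i-Q=0,
\]
whose coefficients are the same for all $i$. Hence at any interior critical point the $u_i$'s, and so the $p_i$'s, take at most two distinct values. Using this structural fact, the minimum is uniquely attained at the fully symmetric point $p_i=C/n$: it is a critical point by symmetry; any other interior critical point (with two distinct coordinate values) is a saddle, which I verify by perturbing in a direction that further equalizes its two groups; and any boundary configuration (some $p_i\in\{0,1\}$) is handled by induction on $n$ upon deleting a zero coordinate, using $C\le(b+1)/b<2$ (since $b>1$) to rule out two coordinates equal to $1$.

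For the maximum I reduce to the boundary and re-apply the two-value characterization on each reduced face. A direct value comparison shows the maximum cannot be the fully symmetric interior critical point. In the $C\le 1$ regime, I show an optimizer has some coordinate equal to $0$, reducing to the $(n-1)$-variable problem with the same $C$; iteration collapses all mass onto one coordinate, giving $(C,0,\ldots,0)$. In the $C\ge 1$ regime, since $C<2$ only one coordinate can equal $1$; I show that an optimizer must place $p_i=1$ for some $i$, after which the remaining $n-1$ variables satisfy $\sum_{k\neq i}p_k=C-1\in[0,1)$ and, by the two-value characterization applied to the reduced problem, the nonzero remaining entries share a common value. Varying the number $k\in\{1,\ldots,n-1\}$ of these nonzero entries recovers exactly the listed candidates $(1,(C-1)/k,\ldots,(C-1)/k,0,\ldots,0)$ and their permutations.

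The main obstacle I anticipate is the interior-to-boundary reduction for the maximum. A naive pairwise ``polarizing'' swap (fix $p_i+p_j$, vary $p_i-p_j$) makes $f$ quadratic in $p_ip_j$, but the sign of the leading coefficient depends on whether $\sum_{k\neq i,j}\alpha p_k/(1-\alpha p_k)$ exceeds $2$, and this can flip as the other $p_k$'s vary. Hence $f$ is not globally Schur-convex, so one cannot uniformly push mass to the boundary via a single sequence of pairwise swaps; the Lagrange-plus-induction scheme is what makes it possible to pin down the finitely many true maximizers cleanly.
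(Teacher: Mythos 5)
Your proposal follows essentially the same route as the paper: Lagrange multipliers on the slice $\sum_i p_i=C$ to characterize the interior critical points, identification of the fully symmetric point as the interior minimum, and a push-to-the-boundary/induction argument over the faces $p_j=0$ and $p_j=1$ that produces exactly the listed maximizer candidates. Your quadratic-in-$u_i$ formulation, which honestly records that stationarity only forces the $p_i$'s into at most two groups (so the genuinely two-valued case must still be disposed of as a saddle), is in fact a more careful rendering of the step the paper dispatches with ``by the argument of symmetry.''
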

\begin{proof}
We can treat $\sum_i\hat{G}_i$ as a function of $(p_1,\ldots,p_n)$, and find the critical points in the region $\{\sum^n_{i=1}p_i=C, 0<p_i<1\}$ by the \emph{Lagrange method}:
\begin{align*}
&\frac{\partial}{\partial{}p_i}\left\{\sum^n_{i=1}\hat{G}_i+\lambda\left(\sum^n_{i=1}p_i\right)\right\}=0\\
&\Rightarrow{}\left[\prod_{j\neq{}i}\left(1-\frac{b}{1+b}p_j\right)\right]\left(1-\sum_{j\neq{}i}\frac{\frac{b}{1+b}p_j}{1-\frac{b}{1+b}p_j}\right)+\lambda=0.
\end{align*}
Considering the partial derivatives for $i=1$ and $i=2$, we have
\begin{align*}
\left[\prod_{j\neq{}1}\left(1-\frac{b}{1+b}p_j\right)\right]\left(1-\sum_{j\neq{}1}\frac{\frac{b}{1+b}p_j}{1-\frac{b}{1+b}p_j}\right)
&=\left[\prod_{j\neq{}2}\left(1-\frac{b}{1+b}p_j\right)\right]\left(1-\sum_{j\neq{}2}\frac{\frac{b}{1+b}p_j}{1-\frac{b}{1+b}p_j}\right)\\
\iff~~~~~ \frac{1}{1-\frac{b}{1+b}p_1}\left[1-\sum_{j\neq{}1}\frac{\frac{b}{1+b}p_j}{1-\frac{b}{1+b}p_j}\right]
&=\frac{1}{1-\frac{b}{1+b}p_2}\left[1-\sum_{j\neq{}2}\frac{\frac{b}{1+b}p_j}{1-\frac{b}{1+b}p_j}\right].
\end{align*}
Let $\gamma_i=\frac{1}{1-\frac{b}{1+b}p_i}$. We have
\begin{align*}
&\gamma_1\left[\gamma_1+\sum^n_{j=1}(1-\gamma_j)\right]=\gamma_2\left[\gamma_2+\sum^n_{j=1}(1-\gamma_j)\right]\\
&\Rightarrow{}-\gamma_1\gamma_2+\gamma_1(n-\gamma_3-\cdots\gamma_n)=-\gamma_1\gamma_2+\gamma_2(n-\gamma_3-\cdots\gamma_n)\\
&\Rightarrow{}\gamma_1=\gamma_2\ \mbox{or}\ \sum^n_{i=3}\gamma_i=n .
\end{align*}
Similar results can be obtained by considering the partial derivatives with respect to any different $p_i$ and $p_j$. Therefore, we have
\begin{align}
&\gamma_i=\gamma_j\ \forall{i,j} \label{gamma_i}\\
&\mbox{or}~ \sum^n_{i=3}\gamma_i=n
\Rightarrow{}\sum^{n-1}_{i=2}\gamma_i=n\ \ (\mbox{by the argument of symmetry}) \notag\\
&\Rightarrow{}\gamma_2=\gamma_n \notag \\
&\Rightarrow{}\gamma_i=\gamma_j,\ \forall{i,j} \ \ (\mbox{by the argument of symmetry}) \label{gamma_ii}.
\end{align}
Both (\ref{gamma_i}) and (\ref{gamma_ii}) result in
$p_i=p_j=\frac{C}{n},\ \forall{i,j}$.
This shows that $\sum^n_{i=1}\hat{G}_i$ has only one critical point in the region $\{\sum^n_{i=1}p_i=C, 0 < p_i < 1\}$, with value
\[
\sum^n_{i=1}\hat{G}_i=C\left(1\!-\!\frac{b}{1+b}\frac{C}{n}\right)^{n-1}.
\]
It can be shown that this value is a minimum. Since there is only one critical point, the minimum can not occur on the boundary of the region. Hence $(p^*_1,\ldots,p^*_n)=(\frac{C}{n},\ldots,\frac{C}{n})$ achieves the global minimum.

For the maximum of $\sum^n_{i=1}\hat{G}_i$, we know that it must occur on the boundary of the region $\{(p_1,\ldots,p_{n}): \sum^n_{i=1}p_i=C, 0 \le p_i \le 1\}$ because there is only one critical point and the point is a minimum.
Since the problem is symmetric with respect to the nodes, in the following, we will only consider the representative solutions of $(p^*_1,\ldots,p^*_n)$. It is straightforward to see that their permutations are also solutions.

When $C \le 1$, the boundary is $\bigcup_{j=1}^{n}\{ (p_1,\ldots,p_{n}): p_j = 0, \sum^n_{i=1, i\neq j}p_i=C, p_i \ge 0, \forall i \neq j\}$.
Note that if some $p_i$'s are zeroes, then the corresponding nodes' throughputs are zero, and we can remove them and the problem is reduced to itself with fewer variables.
So if $(p^*_1,\ldots,p^*_n)$ achieves the maximum of $\sum^n_{i=1}\hat{G}_i$, it must always be on the boundary when we reduce the problem to another one with fewer variables.
It can then be easily seen that the boundary point $(p^*_1,\ldots,p^*_n)=(C,0,\ldots,0)$ achieves the maximum of $\sum^n_{i=1}\hat{G}_i$, which can be verified since $\sum^n_{i=1}\hat{G}_i\le\sum^{n-1}_{i=1}p_i=C$.


When $C>1$, the boundary is $\bigcup_{j=1}^{n} \big\{$ $\{ (p_1,\ldots,p_{n}): p_j = 0, \sum^n_{i=1, i\neq j}p_i=C, 0 \le p_i \le 1, \forall i \neq j\}\bigcup$ $\{ (p_1,\ldots,p_{n}): p_j = 1, \sum^n_{i=1, i\neq j}p_i=C-1, p_i \ge 0, \forall i \neq j\}$ $\big\}$.
We have the set of boundary points $\{(\frac{C}{2},\frac{C}{2},0,\ldots,0),\ldots,$ $(\frac{C}{n-1},\ldots,\frac{C}{n-1},0),$ and their permutations$\}$
which are critical points that achieve the minimum $\sum^n_{i=1}\hat{G}_i$ (of the corresponding problem dimensions) as we remove the zero-throughput nodes to reduce the problem.
Taking the reduced problems as well as the original problem into consideration, we can see that the the maximum of $\sum^n_{i=1}\hat{G}_i$ will eventually occur on the boundary $\bigcup_{j=1}^{n} \{ (p_1,\ldots,p_{n}): p_j = 1, \sum^n_{i=1, i\neq j}p_i=C-1, p_i \ge 0, \forall i \neq j\}$.
Without loss of generality,
we consider the boundary points $\{ (p_1,\ldots,p_{n-1},1): \sum^{n-1}_{i=1}p_i=C-1, p_i \ge 0, i \neq n\}$. Similar to the derivation of the minimum of $\sum^n_{i=1}\hat{G}_i$, it can be shown that $\sum^n_{i=1}\hat{G}_i$ has only one critical point $(\frac{C-1}{n-1},\ldots,\frac{C-1}{n-1},1)$, which is the maximum, in the region $\{ (p_1,\ldots,p_{n-1},1): \sum^{n-1}_{i=1}p_i=C-1, p_i > 0, i \neq n\}$. 
Therefore, when $C>1$
we only need to consider $(p^*_1,\ldots,p^*_n)\in$ \{$(1,C-1,0,\ldots,0)$, $(1,\frac{C-1}{2}$, $\frac{C-1}{2},0,\ldots,0),\ldots,$ $(1,\frac{C-1}{n-1},\ldots,\frac{C-1}{n-1})$ and their permutations\}
for the maximum $\sum^n_{i=1}\hat{G}_i$.
The proof is complete.
\end{proof}

%
%

\begin{lemma}\label{Increase_rho_increase_S}
In a network consisting of $n$ homogeneous nodes having the same data transmission period $T_i=MT_0, \forall i$, the same channel code and feasible throughput demands $(\rho,\ldots,\rho)$, we have the average power consumption at the better Nash equilibrium point increase as $\rho$ increases.
\end{lemma}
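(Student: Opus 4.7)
The plan is to exploit the symmetry of the problem and reduce the claim to a one-variable monotonicity statement. Since the throughput demands are all equal to $\rho$ and the nodes share the same $T_i = MT_0$ and channel code, the equilibrium system in \emph{Proposition \ref{Proposition NE equations}} is invariant under permutations of the node indices. By \emph{Theorem \ref{Thm_NE}}, the better Nash equilibrium is unique, hence it must itself be symmetric, i.e., $p_i = p$ for all $i$. This lets me work with scalars: setting $\hat\rho = \rho/P^s(MT_0)$ and $\alpha = b/(1+b)$, the equilibrium condition \eqref{Corresponding NE Equations} collapses to
\begin{equation*}
p(1-\alpha p)^{n-1} \;=\; \frac{e^{bN_0/P_T}\,\hat\rho}{M(1-n\hat\rho)}.
\end{equation*}

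Next I would show that on the better branch, $p$ is a strictly increasing function of $\hat\rho$. A direct differentiation of $f(p) \triangleq p(1-\alpha p)^{n-1}$ gives $f'(p) = (1-\alpha p)^{n-2}(1-n\alpha p)$, which is strictly positive precisely when $np < 1/\alpha = (b+1)/b$; this is exactly the range that characterizes the better Nash equilibrium. Since the right-hand side of the displayed equation is strictly increasing in $\hat\rho$ (numerator up, denominator down), $p$ must be a strictly increasing function of $\hat\rho$ along the better branch.

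I would then rewrite the per-node power consumption \eqref{Average_power_investment} in a form that makes monotonicity transparent. Using the equivalent expression $G = e^{-bN_0/P_T}p(1-\alpha p)^{n-1}$ together with $G = \hat\rho/[M(1-n\hat\rho)]$, one obtains
\begin{equation*}
(1-n\hat\rho)\,p \;=\; \frac{e^{bN_0/P_T}\,\hat\rho}{M\,(1-\alpha p)^{n-1}},
\end{equation*}
so the per-node power consumption at the (symmetric) better Nash equilibrium becomes
\begin{equation*}
S_i \;=\; \hat\rho \;+\; \beta\,\frac{e^{bN_0/P_T}\,\hat\rho}{M\,(1-\alpha p)^{n-1}}.
\end{equation*}
Both factors on the right are now manifestly increasing in $\hat\rho$: the explicit $\hat\rho$ grows, and since $p$ increases with $\hat\rho$ (shown above), $(1-\alpha p)^{n-1}$ decreases, so the whole second term grows. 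Because $P^s(MT_0)$ does not depend on $\rho$, $\hat\rho$ itself is a strictly increasing linear function of $\rho$, and the claim follows. The case $n=1$ is handled trivially by the same identity (the product is empty), giving $S_1 = \hat\rho(1 + \beta e^{bN_0/P_T}/M)$.

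The main obstacle I anticipate is justifying the reduction to the symmetric solution; it relies on invoking the uniqueness portion of \emph{Theorem \ref{Thm_NE}} together with a symmetry argument, rather than on direct computation. Once that step is secured, the remaining work is an elementary monotonicity argument based on the algebraic simplification above.
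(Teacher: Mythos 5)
Your proposal is correct, and while its overall skeleton matches the paper's (reduce to a symmetric equilibrium, show $p$ increases with $\rho$ on the better branch, then show $S_i$ increases), the decisive step is handled by a genuinely different and cleaner route. The paper substitutes $\hat\rho$ as a function of $p$ into $S_i=\hat\rho+\beta(1-n\hat\rho)p$, differentiates, and is left verifying the sign of the numerator $M'(1-\alpha p)^n(1-\alpha pn+\alpha p^2\beta n^2-\beta\alpha p^2 n)+\beta(1-\alpha p)^2$ via the inequality $1+\alpha pn(\beta pn-\beta p-1)\ge 0$. You instead use the equilibrium identity $(1-n\hat\rho)p = e^{bN_0/P_T}\hat\rho/\bigl[M(1-\alpha p)^{n-1}\bigr]$ to rewrite $S_i=\hat\rho+\beta e^{bN_0/P_T}\hat\rho/\bigl[M(1-\alpha p)^{n-1}\bigr]$, after which monotonicity in $\hat\rho$ is immediate (the explicit $\hat\rho$ grows and $(1-\alpha p)^{n-1}$ shrinks because $p$ grows), with no differentiation or sign-checking needed. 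You also make explicit the reduction to a symmetric equilibrium via the uniqueness part of \emph{Theorem \ref{Thm_NE}} and permutation invariance, a step the paper takes for granted (it also follows directly from (\ref{Eq_p_jtop_i_2}) with equal $\tilde\rho_j$). The only minor caveat is to note that the manipulation requires $1-n\hat\rho>0$, which holds throughout the feasible region; with that, your argument is complete and arguably preferable to the paper's.
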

\begin{proof}
By (\ref{Corresponding NE Equations}), we have the following relation for any feasible throughput demands $(\rho,\ldots,\rho)$:
\begin{align}\label{Rho_in_p}
&\frac{\hat{\rho}}{(1-n\hat{\rho})M'}=p\left(1-\frac{bp}{1+b}\right)^{n-1}\notag\\
\Rightarrow~&\hat{\rho}=\frac{M'p(1-\alpha p)^{n-1}}{1+nM'p(1-\alpha p)^{n-1}},
\end{align}
where $M'=Me^{-b\frac{N_0}{P_T}}$, $\hat{\rho}=\frac{\rho}{P^s}$, $P^s$ is the average frame success rate when the data transmission period is $MT_0$, $\alpha=\frac{b}{1+b}$ and $(p,\ldots,p)$ is the request probability vector.

It can be easily verified that the maximum of $\rho$ is achieved when $p=\frac{1}{n\alpha}$, and the value of $\rho$ increases as $p$ increases when $0\le p \le \frac{1}{n\alpha}$. That is, the feasible throughput demand $\rho$ is an increasing function in $p$ at the better Nash equilibrium point (recall that $np\le \frac{1}{\alpha}$ at the better Nash equilibrium point by \emph{Theorem \ref{Thm_NE}}).


It remains to show that the average power consumption given by
\[
S_i(p)=\hat{\rho}+\beta(1-n\hat{\rho})p
\]
is an increasing function in $p$ for $0\le p\le \frac{1}{n\alpha}$, where $\beta=\frac{\tilde{T}_0}{T_0}$.
Equivalently, we will show that $\frac{dS_i(p)}{dp}\ge 0$ for $0\le p\le \frac{1}{n\alpha}$.

Substitute (\ref{Rho_in_p}) into $S_i(p)$
and then differentiate with respect to $p$. After some manipulations, we arrive at
\begin{equation*}
\frac{dS_i(p)}{dp}=\frac{M'(1-\alpha p)^n(1-\alpha pn+\alpha p^2\beta n^2-\beta\alpha p^2n)+\beta(1-\alpha p)^2}{\left[1-\alpha p+nM'p(1-\alpha p)^n\right]^2}.
\end{equation*}
For $0\le p\le \frac{1}{n\alpha}$, we have $0\le \alpha pn\le 1$ and $(\beta pn-\beta p-1)\ge -1$.
Thus
\begin{align*}
&1+ \alpha pn(\beta pn-\beta p-1)\ge 0\\
\iff &1-\alpha pn+\alpha p^2\beta n^2-\beta\alpha p^2n\ge 0.
\end{align*}
It can then be easily observed that $\frac{dS_i(p)}{dp}\ge 0$ for $0\le p\le \frac{1}{n\alpha}$.
\end{proof}

We now start the proof of \emph{Theorem \ref{Thm_PowerConsumption}} for $n>1$.
\begin{proof}[Proof of Theorem \ref{Thm_PowerConsumption}]\\
By (\ref{Eq_rho_t}) and (\ref{Average_power_investment}), we have the total average power consumption for throughput demands $(\rho_1,\ldots,\rho_n)$ as follows
\begin{align}\label{Sum_power}
\sum_i S_i(\textbf{p})=\frac{\beta (\sum_ip_i)+M\sum_iG_i}{1+M\sum_iG_i}.
\end{align}

We want to find the maximum of the total average power consumption given by (\ref{Sum_power}) among all feasible throughput demands at the better Nash equilibrium point, or equivalently among the set $\{\textbf{p}:0\le p_i\le 1\ \forall\ i, \sum_ip_i\le\frac{b+1}{b}\}$ by \emph{Proposition \ref{Proposition_Equivalence_SumSi}}.

First note that the function $f(x)=\frac{K+x}{1+x},x\ge 0$ is an increasing function if $K<1$, and a decreasing function if $K>1$. So for fixed $\sum_ip_i$ we have
\begin{enumerate}
\item[(i)] If $\beta(\sum_ip_i)\le 1$, $\sum_iS_i(\textbf{p})$ is maximized when $\sum_i G_i$ is maximized.
\item[(ii)] If $\beta(\sum_ip_i)\ge 1$, $\sum_iS_i(\textbf{p})$ is maximized when $\sum_i G_i$ is minimized.
\end{enumerate}

When $\beta(\frac{b+1}{b})\le 1$, we have case (i) for all values of $\sum_i p_i$ because $\sum_ip_i\le\frac{b+1}{b}$. For this case, it follows from \emph{Lemma \ref{Lemma bounds on sum G_i}} and $\hat{G}_i = G_i e^{b\frac{N_0}{P_T}}$ that we only need to consider
\begin{itemize}
\item when $C\le 1$: $(p_1,\ldots,p_n)\in$ \{$(C,0,\ldots,0)$ and its permutations\};
\item when $1\le C\le \frac{b+1}{b}$: $(p_1,\ldots,p_n)\in$ \{$(1,C-1,0,\ldots,0)$, $(1,\frac{C-1}{2}$, $\frac{C-1}{2},0,\ldots,0),\ldots,$ $(1,\frac{C-1}{n-1},\ldots,\frac{C-1}{n-1})$, and their permutations\}.
\end{itemize}

Due to the symmetry of the problem, we will only consider the representative $(p_1,\ldots,p_n)$'s. By (\ref{Sum_power}) and (\ref{Eq_G_i}), the maximum of $\sum_i S_i(\textbf{p})$ when $(p_1,\ldots,p_n)=(C,0,\ldots,0)$ with $\sum_ip_i=C\le1$ is clearly $\frac{M'+\beta}{M'+1}$ when $C=1$, where $M'=Me^{-b\frac{N_0}{P_T}}$.
For $1\le C\le\frac{b+1}{b}$,
the total average power consumption $\sum_i S_i(\textbf{p})$ when $(p_1,\ldots,p_n)=(1,x,\ldots,x)$, with $0\le x\le\frac{1}{(n-1)b}$,
is given by
\begin{equation}
\sum_i S_i((1,x,\ldots,x))=\frac{\beta[1+(n-1)x]+M'[1+(n-1)x-n\alpha x](1-\alpha x)^{n-2}}{1+M'[1+(n-1)x-n\alpha x](1-\alpha x)^{n-2}},
\label{average_power_consumption}
\end{equation}
where $\alpha=\frac{b}{1+b}$.

We first consider the point $(1,\frac{C-1}{n-1},\ldots,\frac{C-1}{n-1})$.
Define
\[
\Phi_{b,n}(x)=[1+(n-1)x-n\alpha x](1-\alpha x)^{n-2}  
\]
and
\begin{align}
\Psi_{b,n}=\Phi_{b,n}\left(\frac{1}{(n-1)b}\right)=\frac{(n-1)(1+b)^2-nb}{b(1+b)(n-1)}\left[\frac{bn+n-b-2}{(1+b)(n-1)}\right]^{n-2}.
\end{align}
From (\ref{average_power_consumption}), when $x=\frac{1}{(n-1)b}$,
\[
\sum_i S_i((1,x,\ldots,x))= \frac{M'\Psi_{b,n}+\beta\frac{b+1}{b}}{M'\Psi_{b,n}+1}.
\]
We will show that the maximum of $\sum_iS_i((1,x,\ldots,x))$ is either
$\frac{M'+\beta}{M'+1}$
when $x=0$, or
$\frac{M'\Psi_{b,n}+\beta\frac{b+1}{b}}{M'\Psi_{b,n}+1}$ when $x=\frac{1}{(n-1)b}$, that is,
\begin{align}\label{Eq_G_i_atboundaryPt}
\max_{0\le x\le\frac{1}{(n-1)b}}\sum_i S_i((1,x,\ldots,x))=\max\left\{\frac{M'+\beta}{M'+1},\frac{M'\Psi_{b,n}+\beta\frac{b+1}{b}}{M'\Psi_{b,n}+1}\right\}.
\end{align}

Note that for $\frac{M'+\beta}{M'+1}$ to be the maximum, we have
\begin{align*}
&\frac{M'+\beta}{M'+1}\ge\sum_iS_i((1,x,\ldots,x))\\
\Leftrightarrow~&\beta\le\frac{M'(1-\Phi_{b,n}(x))}{(M'+1)(1+(n-1)x)-(1+M'\Phi_{b,n}(x))}\triangleq\beta_1.
\end{align*}
Similarly, for $\frac{M'\Psi_{b,n}+\beta\frac{b+1}{b}}{M'\Psi_{b,n}+1}$ to be the maximum,
\begin{align*}
&\frac{M'\Psi_{b,n}+\beta\frac{b+1}{b}}{M'\Psi_{b,n}+1}\ge\sum_iS_i((1,x,\ldots,x))\\
\Leftrightarrow~ &\beta\ge\frac{M'(\Phi_{b,n}(x)-\Psi_{b,n})}{(M'\Phi_{b,n}(x)+1)\frac{b+1}{b}-(1+(n-1)x)(1+M'\Psi_{b,n})}\triangleq\beta_2.
\end{align*}
To prove (\ref{Eq_G_i_atboundaryPt}), we will show that $\beta_1\ge\beta_2$. In that case, if $\beta > \beta_1$ ($\frac{M'+\beta}{M'+1}$ is not the maximum), we will have $\beta > \beta_2$ ($\frac{M'\Psi_{b,n}+\beta\frac{b+1}{b}}{M'\Psi_{b,n}+1}$ is the maximum). Similarly, if $\beta < \beta_2$ ($\frac{M'\Psi_{b,n}+\beta\frac{b+1}{b}}{M'\Psi_{b,n}+1}$ is not the maximum), we will have $\beta < \beta_1$ ($\frac{M'+\beta}{M'+1}$ is the maximum).

We first show that $\Phi_{b,n}(x)$ is decreasing in $0\le x\le \frac{1}{(n-1)b}$.
\[
\frac{d}{dx}\Phi_{b,n}(x)\le 0 \Leftrightarrow \alpha(n\alpha-n+1)x\le (2\alpha-1)
\]
This inequality is satisfied if $(n\alpha-n+1)\le 0$, since $x\ge 0$ and $2\alpha-1=2\frac{b}{1+b}-1>\frac{1}{3}$.
If $(n\alpha-n+1)>0$, or equivalently, $n<b+1$, we need to show that $x\le \frac{2\alpha-1}{\alpha(n\alpha-n+1)}$.
This is satisfied because we have $n\ge 2>\frac{b+1}{b}$ and then
$
x\le \frac{1}{(n-1)b}< \frac{2\alpha-1}{\alpha(n\alpha-n+1)}.
$
Therefore, $\Phi_{b,n}(x)$ is decreasing in $0\le x\le \frac{1}{(n-1)b}$.

The fact that $\Phi_{b,n}(x)$ is decreasing in $0\le x\le \frac{1}{(n-1)b}$ implies that $(M'\Phi_{b,n}(x)+1)\frac{b+1}{b}-(1+(n-1)x)(1+M'\Psi_{b,n})$ is decreasing in $0\le x\le \frac{1}{(n-1)b}$, and
it follows that $(M'\Phi_{b,n}(x)+1)\frac{b+1}{b}-(1+(n-1)x)(1+M'\Psi_{b,n})\ge 0$ since its value is zero when $x=\frac{1}{(n-1)b}$. In addition, $(M'+1)(1+(n-1)x)-(1+M'\Phi_{b,n}(x))$ is increasing in $0\le x\le \frac{1}{(n-1)b}$, and $(M'+1)(1+(n-1)x)-(1+M'\Phi_{b,n}(x)) \ge 0$ since its value is zero when $x=0$. With these properties, we have
\begin{align*}
&\beta_1\ge\beta_2\\
\Leftrightarrow~& (1-\Phi_{b,n}(x))\left[(M'\Phi_{b,n}(x)+1)\frac{b+1}{b}-(1+(n-1)x)(1+M'\Psi_{b,n})\right]\\
&\ge (\Phi_{b,n}(x)-\Psi_{b,n})[(M'+1)(1+(n-1)x)-(1+M'\Phi_{b,n}(x))] \\
\Leftrightarrow~& \frac{1+M'\Phi_{b,n}(x)}{b}\left[bx(1-n)(1-\Psi_{b,n})+1-\Phi_{b,n}(x)\right]\ge 0.
\end{align*}
Let $U(x)=bx(1-n)(1-\Psi_{b,n})+1-\Phi_{b,n}(x)$. We have $U(0)=0$ and $U(\frac{1}{(n-1)b})=0$, so $\beta_1\ge\beta_2$ if the second derivative of $U(x)$, $\frac{d^2U}{dx^2}\le 0$ in $0\le x\le \frac{1}{(n-1)b}$.
\[
\frac{d^2U}{dx^2}\le 0 \Leftrightarrow \alpha(n-2)(n-n\alpha-1)x\ge (n-2)(2-3\alpha)
\]
Note that $2-3\alpha=2-3\frac{b}{1+b}< 0$. The above inequality is clearly satisfied if $n-n\alpha-1\ge 0$ (that is, $n \ge b+1$) or $n=2$, so we only need to show that $x\le\frac{2-3\alpha}{\alpha(n-n\alpha-1)}$ if $3\le n<b+1$.
We have
\[
x\le\frac{1}{(n-1)b}\le \frac{2-3\alpha}{\alpha(n-n\alpha-1)}~\Leftrightarrow~n\ge\frac{b^2-1}{b^2-b-1},
\]
which is satisfied because $n\ge3\ge\frac{b^2-1}{b^2-b-1}$.
Now, we have completed the proof of (\ref{Eq_G_i_atboundaryPt}).

For the points $(1,C-1,0,\ldots,0)$, $(1,\frac{C-1}{2}$, $\frac{C-1}{2},0,\ldots,0),\ldots,$ $(1,\frac{C-1}{n-2},\ldots,\frac{C-1}{n-2},0)$, the above proof still applies, and results similar to (\ref{Eq_G_i_atboundaryPt}) (with some nodes having zero request probabilities) can be derived.
We find that for $b>2$, $\Psi_{b,2}<\cdots<\Psi_{b,n}$, or equivalently,
$
\frac{M'\Psi_{b,2}+\beta\frac{b+1}{b}}{M'\Psi_{b,2}+1} < \cdots < \frac{M'\Psi_{b,n}+\beta\frac{b+1}{b}}{M'\Psi_{b,n}+1}.
$
Therefore, we conclude that when $\beta\frac{b+1}{b}\le1$, the maximum total average power consumption of the $n$-node network is either $\frac{M'+\beta}{M'+1}$ or $\frac{M'\Psi_{b,n}+\beta\frac{b+1}{b}}{M'\Psi_{b,n}+1}$. The inequality $\Psi_{b,i+1}-\Psi_{b,i}>0$ can be shown by transforming it into a polynomial of $b$, and showing that it is positive for $b>2$. Because it is difficult to mathematically prove this, we plot the polynomial and observe that the polynomial is positive for $b> \zeta_i$ (or $0<\frac{1}{b}<\frac{1}{\zeta_i}$ by using the corresponding polynomial with variable $b'=\frac{1}{b}$), where $\zeta_i\le 2$ and is decreasing in $i$, as shown in Fig.~\ref{Fig_Thd}.

Comparing $\frac{M'+\beta}{M'+1}$ and $\frac{M'\Psi_{b,n}+\beta\frac{b+1}{b}}{M'\Psi_{b,n}+1}$, we have
\begin{align*}
&\frac{M'\Psi_{b,n}+\beta\frac{b+1}{b}}{M'\Psi_{b,n}+1} \ge\frac{M'+\beta}{M'+1}\\
\iff &\beta\ge\frac{M'b(1-\Psi_{b,n})}{1+M'+M'b(1-\Psi_{b,n})}.
\end{align*}
Therefore, when $\beta(\frac{b+1}{b})\le 1$, the total average power consumption of the $n$-node network $\sum_i S_i(\textbf{p}) \le \frac{M'+\beta}{M'+1}$ if $\beta\le\frac{M'b(1-\Psi_{b,n})}{1+M'+M'b(1-\Psi_{b,n})}$;
and $\sum_i S_i (\textbf{p}) \le \frac{M'\Psi_{b,n}+\beta\frac{b+1}{b}}{M'\Psi_{b,n}+1}$ if $\frac{M'b(1-\Psi_{b,n})}{1+M'+M'b(1-\Psi_{b,n})}\le\beta\le\frac{b}{b+1}$.

When $\beta(\frac{b+1}{b})\ge 1$, both case (i) and case (ii) can happen. If $\beta(\sum_ip_i)\le 1$ (i.e., case (i)), we know from the above derivation that $\sum_i S_i (\textbf{p}) \le \frac{M'\Psi_{b,n}+\beta\frac{b+1}{b}}{M'\Psi_{b,n}+1}$ because $\beta \ge \frac{b}{b+1} > \frac{M'b(1-\Psi_{b,n})}{1+M'+M'b(1-\Psi_{b,n})}$. Now define $\Gamma(k) \triangleq \frac{b+1}{b}(1-\frac{1}{k})^{k-1}$.
By \emph{Lemma \ref{Lemma bounds on sum G_i}}, the minimum of $\sum_i \hat{G}_i$ when $\sum_ip_i= \frac{b+1}{b}$ is $\Gamma(n)$ which is achieved when $p_i=\frac{b+1}{nb}, \forall i$.
Since $\Psi_{b,n}$ is a realization of $\sum_i \hat{G}_i$, we have $\Gamma(n)\le \Psi_{b,n}$.
Therefore, when $\beta(\frac{b+1}{b})\ge 1$ and $\beta(\sum_ip_i)\le 1$ we have
$
\sum_i S_i (\textbf{p}) \le \frac{M'\Psi_{b,n}+\beta\frac{b+1}{b}}{M'\Psi_{b,n}+1} \le \frac{M'\Gamma(n)+\beta\frac{b+1}{b}}{M'\Gamma(n)+1}.
$

When $\beta(\frac{b+1}{b})\ge 1$ and $\beta(\sum_ip_i)\ge 1$ (i.e., case (ii)), it follows from \emph{Lemma \ref{Lemma bounds on sum G_i}} that the minimum of $\sum_i G_i$ is achieved when all $p_i$'s are equal. 
In addition, by \emph{Lemma \ref{Increase_rho_increase_S}}, the total average power consumption $\sum_i S_i(\textbf{p})$ at the better Nash equilibrium point in a homogeneous network is maximized by the maximum feasible throughput demands.
For $n$ homogenous nodes with throughput demands $(\rho,\ldots,\rho)$, we have the maximum feasible throughput $n\rho^{max}= \frac{M'\Gamma(n)}{1+M'\Gamma(n)}P^s$ (obtained when $p=\frac{1}{n\alpha}$ by (\ref{Rho_in_p})),
where $P^s$ is the average frame success rate when the data transmission period is $MT_0$. By (\ref{Sum_power}) with $\sum_i G_i = e^{-b\frac{N_0}{P_T}}\frac{b+1}{b}\left(1-\frac{1}{n}\right)^{n-1}$, the corresponding $\sum_i S_i(\textbf{p})$ is $\frac{M'\Gamma(n)+\beta\frac{b+1}{b}}{M'\Gamma(n)+1}$.

In summary, when $\beta(\frac{b+1}{b})\ge 1$,
$\sum_iS_i (\textbf{p}) \le \frac{M'\Gamma(n)+\beta\frac{b+1}{b}}{M'\Gamma(n)+1}$. In addition,
$1 \le \frac{M'\Gamma(n)+\beta\frac{b+1}{b}}{M'\Gamma(n)+1} \le \beta\frac{b+1}{b}$.

\end{proof}

\begin{figure}[!t]
\centering
\includegraphics[width=0.7\textwidth]{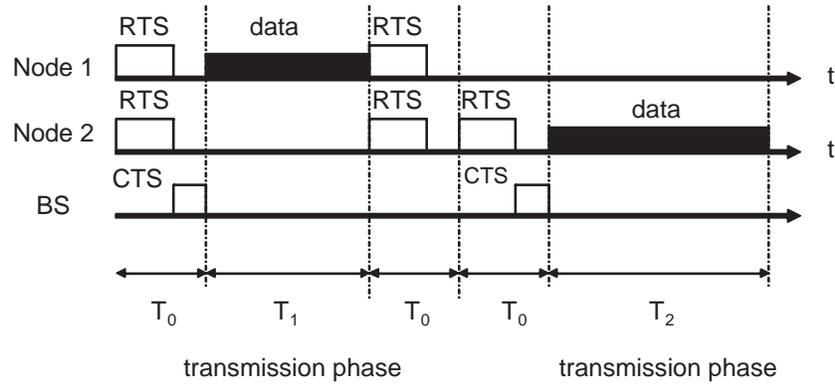}
\caption{Illustration of the CSMA by the RTS/CTS handshake mechanism in the network.}
\label{RTS_CTS}
\end{figure}

\begin{figure}[!t]
\centering
\includegraphics[width=0.7\textwidth]{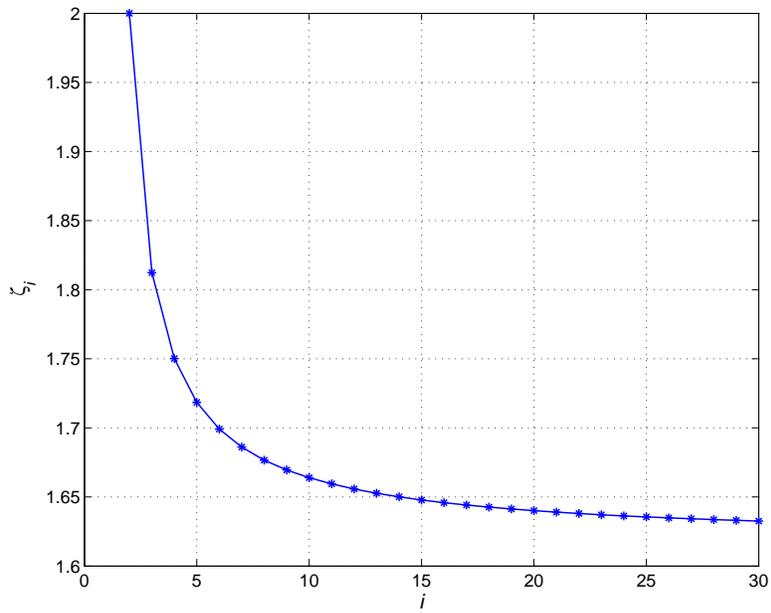}
\caption{The plot of $\zeta_i$ such that $\Psi_{b,i}<\Psi_{b,i+1}$ for $b>\zeta_i$.}
\label{Fig_Thd}
\end{figure}



\end{document}